\documentclass[11pt]{article}
\usepackage{graphicx,amsmath,amssymb,amsthm,subfig,url,hyperref}
\usepackage{tikz-cd} 
\usepackage{bbm,float}
\usepackage[titletoc]{appendix}



\newtheorem{theorem}{Theorem}
\newtheorem{lemma}[theorem]{Lemma}
\newtheorem{corollary}[theorem]{Corollary}

\numberwithin{theorem}{section} 

\mathchardef\mhyphen="2D

\usepackage[top=1in, bottom=1in, left=1in, right=1in]{geometry}

\newtheorem*{theorem*}{Theorem}

\graphicspath{{figures/}}
\author{Nancy Lynch and Mien Brabeeba Wang}


\title{Integrating Temporal Information to Spatial Information in a Neural Circuit}


\begin{document}
\maketitle
\begin{abstract}
In this paper, we consider networks of deterministic spiking neurons, firing synchronously at discrete times; such spiking neural networks are inspired by networks of neurons and synapses that occur in brains. We consider the problem of translating temporal information into spatial information in such networks, an important task that is carried out by actual brains.

Specifically, we define two problems:  ``First Consecutive Spikes Counting (FCSC)'' and ``Total Spikes Counting (TSC)'', which model spike and rate coding aspects of translating temporal information into spatial information respectively. Assuming an upper bound of $T$ on the length of the temporal input signal, we design two networks that solve these two problems, each using $O(\log T)$ neurons and terminating in time $1$. We also prove that there is no network with less than $T$ neurons that solves either question in time $0$.
\end{abstract}

\section{Introduction}
One of the most important questions in neuroscience is how humans integrate information over time. Sensory inputs such as visual and auditory stimuli are inherently temporal; yet brains can integrate the temporal information into a single concept, such as recognizing a moving object in a visual scene or forming an entity in a sentence. In the above examples, the temporal information spans over a time scale of $1$-$10$ seconds. However, individual neurons only have transient activities with the time scale of $10$-$100ms$. It is not clear how neurons with transient components can process temporal information over a long time range. In this paper, we are going to present a static network to process temporal information and translate it into spatial information with transient components.  

There are two kinds of neuronal codings: {\it rate coding} and {\it temporal coding}. Rate coding is a neural coding scheme assuming most of the information is coded in the firing rate of the neurons. It is most commonly seen in muscle when the higher firing rates of motor neurons correspond to higher intensity in muscle contraction \cite{Adrian1926}. On the other hand, rate coding cannot be the only neural coding brains employ. A fly is known to react to new stimuli and change its direction of flight within $30$-$40$ ms. For a neuron that spikes at around $50Hz$, which is much higher than the average spiking rate, there is only time to produce $1$-$2$ spikes within this window. There is simply not enough time for neurons to decode rate coding accurately \cite{Rieke1996}. Therefore, neuroscientists proposed the idea of temporal coding, assuming the information is coded in the temporal firing patterns. One of the popular temporal codings is the first-to-spike coding, in which the information is encoded in the duration between the stimulus onset and the first spike. By plotting the timing of the first spike in retina ganglion cells, one can recover an approximately accurate image on a retina \cite{Gollisch2008}. 

We propose two toy problems to model how brains extract temporal information from different coding with transient components. {\it``First consecutive spikes counting" (FCSC)} counts the first consecutive interval of spikes, which is equivalent to counting the distance between the first two spikes, a prevalent temporal coding scheme in the sensory cortex. {\it``Total spikes counting" (TSC)} counts the number of the spikes over an arbitrary interval, which is an example of rate coding. To model the transient components of neurons, we consider a memoryless synchronous spiking neuron model where the firing of a neuron only depends on the spike events one time step ago. 

In this paper, we design two networks that solve the above two problems by translating temporal information into spatial information in time $1$ with $O(\log T)$ neurons. We further show that any network with less than $T$ neurons cannot solve the problems in time $0$. It should be noted that Hitron and Parter also considered the TSC problem \cite{Merav2019} with the time bound $O(\log T)$. In this context, we improve the time bound on the TSC problem from $O(\log T)$ to $1$ by carefully updating all digits in binary representation at once instead of sequentially. We would like to remark that although our problems are biologically inspired, the optimal solutions we propose are not biologically plausible. The networks are not noise-tolerant, whereas the neuronal dynamics are highly noisy and it is hard to conceive that the brain uses binary representation as a neuronal representation. However, the analysis serves as a proof of concept that the brain can process temporal information over a long time range using transient components.

The organization of the rest of the section is as follows. In~\autoref{sec: model}, we formally define the spiking neuron model we are working in. In~\autoref{sec: problem}, we define the two biologically-inspired problems ``First Consecutive Spikes Counting" and ``Total Spikes Counting" which correspond to temporal coding and rate coding respectively. In~\autoref{sec: main theorem}, we provide our main results, solving the two problems optimally in both time and the number of the neurons and showing that we cannot do better.
\subsection{Model}\label{sec: model}
In this work, to model the transient aspect of the neurons, we consider a network of memoryless spiking neurons with deterministic synchronous firing at discrete times. Formally, a neuron $z$ consists of the following data with $t\geq 1$
\[
z^{(t)} = \Theta(\sum_{y\in P_z}w_{yz}y^{(t-1)} - b_z)
\]
where $z^{(t)}$ is the indicator function of neuron $z$ firing at time $t$. $b_z$ is the threshold (bias) of neuron $z$. $P_z$ is the set of presynaptic neurons of $z$, $w_{yz}$ is the strength of connection from neuron $y$ to neuron $z$ and $\Theta$ is a nonlinear function. Here we take $\Theta$ as the Heaviside function given by $\Theta(x) = 1$ if $x>0$ and $0$ otherwise. At $t=0$, we let $z^{(0)} = 0$ if $z$ is not one of the input neurons.

For the rest of the paper, we fix an input neuron $x$ and $m$ output neurons $\lbrace y_i\rbrace_{0\leq i < m}$ in a network.

\subsection{Problem Statement}\label{sec: problem}
\subsubsection{First Consecutive Spikes Counting(T) (FCSC(T))}
Given an input neuron $x$ and the max input length $T$, we consider any input firing sequence such that for all $t \geq T,$ $x^{(t)} = 0$. Define $L_x$ in terms of this firing sequence as follows: if $x^{(t)} = 1$ for some $t$, then there must exist integers $\hat{t}, L$ such that for all $t, t<\hat{t}$ we have $x^{(t)} = 0$, for all $i, 0\leq i<L$ we have $x^{(\hat{t} + i)} = 1$, and $x^{(\hat{t} + L)} = 0$. Define $L_x = L$. (i.e., $L$ is the length of the first consecutive spikes interval in the sequence.)  Otherwise, that is if for all $t\geq 0$, $x^{(t)} = 0$, then define $L_x = 0$. 

Then we say a network of neurons solves FCSC(T) in time $t'$ with $m'$ neurons if there exists an injective function $F: \lbrace 0, \dotsb, T\rbrace\rightarrow \lbrace 0, 1\rbrace^m$ such that for all $x$ and for all $t, t\geq T + t'$ we have $y^{(t)} = F(L_x)$ and the network has $m'$ total neurons.

Intuitively, FCSC serves as a toy model for encoding distance between spikes, a prevalent spike coding in the sensory cortex. For mathematical convenience, we model the problem as counting the distance between non-spikes which is mathematically equivalent as counting the distance between spikes in our model.

\subsubsection{Total Spikes Counting(T) (TSC(T))}
Given an input neuron $x$ and the max input length $T$, we consider any input firing sequence such that for all $t \geq T$, $x^{(t)} = 0$. Define $L_x = |\lbrace t: x^{(t)} = 1, 0\leq t < T \rbrace|$ as the total number of spikes in the sequence. Then we say a network of neurons solves TSC(T) in time $t'$ with $m'$ neurons if there exists an injective function $F: \lbrace 0, \dotsb, n\rbrace\rightarrow \lbrace 0, 1\rbrace^m$ such that for all $x$ and for all $t, t\geq T+ t'$ we have $y^{(t)} = F(L_x)$ and the network has $m'$ total neurons.

Intuitively, TSC serves as a toy model for rate coding implemented by spiking neural networks because the network can extract the rate information by counting the number of spikes over arbitrary intervals. 

Notice that in both problems above, a network solves a task in time $t'$ if, for all $t\geq T + t'$ and for all inputs with max length $T$, the network outputs the solution of the task at time $t$. The definition is equivalent to Maass's time complexity for spiking neurons~\cite{Maass1996}. This definition of the time bound makes natural sense since given a max input length of $T$, it is unreasonable to count the time before the end of the input.

\subsection{Main Theorems}\label{sec: main theorem}

Our contributions in this work are to design networks that solve these two problems respectively with matching lower bounds in numbers of neurons.
\begin{theorem}
\label{FCS}
There exists a network that solves FCSC(T) problem with $O(\log T)$ neurons in time $1$.  
\end{theorem}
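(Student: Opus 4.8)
The plan is to build a network with three functional stages, all wired so that the output stabilizes exactly one step after the input terminates. The key quantity to extract is $L_x$, the length of the first consecutive run of spikes, and we want to write $L_x$ in binary across $\lceil \log_2(T+1)\rceil$ output neurons. Since $L_x \le T$, this requires only $O(\log T)$ output neurons; the challenge is to compute the binary representation in a single time step using only $O(\log T)$ auxiliary neurons, without any mechanism for memory beyond the previous time step.

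First I would construct a ``gating'' subnetwork that detects the first spike and isolates the first consecutive block. Concretely, I would maintain a neuron $s$ that fires from the time of the first spike of $x$ onward (a latch: $s^{(t)} = \Theta(x^{(t-1)} + s^{(t-1)} - 1/2)$ works once seeded, but seeding a persistent latch needs care in this memoryless model, so more likely one tracks ``have we seen a non-spike after the first spike'' via a companion neuron), and a neuron that fires precisely during the first run of spikes. The output of this stage is a single neuron $g$ with $g^{(t)} = 1$ iff time $t$ lies in the first consecutive block of spikes; then $L_x = \sum_t g^{(t)}$. Second, and this is the crux, I would implement a binary counter that, at each time step where $g$ fires, increments a value held in $\lceil \log_2(T+1)\rceil$ neurons representing the bits. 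The naive ripple-carry increment takes $\Theta(\log T)$ steps per increment, which is fatal; following the remark in the introduction (``updating all digits in binary representation at once''), I would instead precompute, via $O(\log T)$ helper neurons per bit, the carry-chain information so that bit $i$ at time $t+1$ is a threshold function of bits $0,\dots,i$ at time $t$ together with $g^{(t)}$ — i.e., bit $i$ flips iff $g^{(t)}=1$ and bits $0,\dots,i-1$ are all $1$. Expressing ``bits $0$ through $i-1$ are all $1$'' as a single Heaviside unit is easy (threshold $= i - 1/2$ on their sum), so each bit's update is a bounded-depth threshold circuit; the subtlety is that it must have depth $1$, which forces introducing, for each $i$, an auxiliary ``all-ones-prefix'' neuron $p_i$ computed in parallel, and then having bit $i$ read $p_i$. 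Since $p_i$ lags by one step, one must double-buffer or argue the counter value is still correct at stabilization time — this bookkeeping is where I expect the real work to lie. Third, once $x$ has terminated (all $g^{(t)} = 0$ for $t \ge T$), the counter holds $L_x$ and simply needs to hold steady, which is a self-loop latch on each bit gated by ``$g$ not firing'', giving the required stabilization at time $T+1$.

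The main obstacle, as flagged, is the memoryless constraint interacting with the depth-$1$ (time-$1$) requirement. Because a neuron at time $t$ sees only time $t-1$, every intermediate quantity needed for the parallel increment — the prefix-AND neurons $p_i$, the latch-control signals, the first-block gate $g$ — must itself be maintained as a persistent neuron updated in lockstep, and I must verify that the whole ensemble reaches a correct fixed point exactly one step after input ends, not two or three. I would handle this by a careful invariant: after the input terminates at time $T$, assert that at time $T$ the bit-neurons already encode $L_x$ (having absorbed the last increment when the final spike of the first block was processed), and that from time $T$ onward the gating signal $g$ is zero, so the hold-latch keeps the bits frozen; then $F$ is just the identity map from $\{0,\dots,T\}$ into $\{0,1\}^m$ via binary encoding, which is manifestly injective. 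A final check is the neuron count: constantly many helper neurons per bit, and $O(\log T)$ bits, gives $O(\log T)$ total, as claimed.
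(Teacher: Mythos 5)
Your high-level decomposition (binary counter with prefix-AND helpers, followed by a latch) matches the paper's architecture: the paper's counter network uses neurons $z_i$ holding the bits and helper neurons $in_i$ playing the role of your $p_i$, and a separate ``capture network'' serves as the latch. But you stop at exactly the point where the paper's real work happens, and what you sketch in its place would not work as stated.

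The central issue is the bit-update rule. You write that bit $i$ should ``flip iff $g^{(t)}=1$ and bits $0,\dots,i-1$ are all $1$,'' with the prefix-AND offloaded to an auxiliary neuron $p_i$. But even granting $p_i$ the correct value, the update $z_i^{(t+1)} = z_i^{(t)} \oplus p_i^{(t)}$ is an XOR of two bits, and XOR is not computable by a single Heaviside threshold unit — so ``bit $i$ reads $p_i$'' cannot be a one-layer gate, independent of the lag issue you flag. The paper avoids the XOR entirely by using the fact that bit $i$ toggles only once every $2^i$ steps, so its \emph{turn-on} and \emph{turn-off} events can be detected by two different, individually-monotone conditions spread over two consecutive time steps: $z_i$ turns on when it directly reads all lower $z_j^{(t-1)}=1$ (no helper needed), holds via a self-loop, and turns off one step after the helper $in_i$ fires, where $in_i^{(t)}=1$ iff $z_1,\dots,z_i$ were all $1$ at time $t-1$. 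The one-step lag in $in_i$ is thus not an obstacle to be ``double-buffered'' away; it is load-bearing, because $in_i$ only needs to inhibit $z_i$ the step \emph{after} the count reaches $2^{i+1}-1$. Your proposal names the lag as ``where I expect the real work to lie'' but offers neither this (or any other) mechanism, nor an acknowledgment that the naive XOR formulation is infeasible in this model. That is the gap.

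Two secondary points. First, your explicit ``gating'' subnetwork (a latch $s$ plus a gate $g$ isolating the first run) is not needed and, as you yourself note, is not straightforward to seed in a memoryless model; the paper sidesteps it by having the counter increment only while $x$ fires (each $z_i$ has $x$ as an input with a large positive weight, so the counter resets to zero as soon as $x$ goes silent), and the capture network latches the $z_i$ values at the first time step where $x$ is off but some $z_j$ is still on — which happens exactly at $\hat t + L_x$. Second, your third stage needs more care than ``a self-loop latch gated by $g$ not firing'': the paper's capture network uses an additional persistence neuron $s$ that, once triggered, permanently blocks further writes to the outputs, so that a later block of spikes cannot overwrite the captured count; a per-bit latch gated only by the instantaneous value of $g$ would not by itself prevent that.
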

\begin{theorem}
\label{TSCMain}
There exists a network that solves TSC(T) problem with $O(\log T)$ neurons in time $1$.   
\end{theorem}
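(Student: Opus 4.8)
\noindent\emph{Proof idea.}
I would build, using $O(\log T)$ neurons, a running binary counter for $N_t := |\{i : 0\le i<t,\ x^{(i)}=1\}|$ and let the output neurons carry its bits directly, so that one may take $F$ to be the standard binary encoding of $\{0,\dots,T\}$ into $\{0,1\}^m$ with $m=\lceil\log_2(T{+}1)\rceil+O(1)$; this $F$ is injective and $L_x=N_T$, so it suffices to show the counter holds $\mathrm{bin}(N_T)$ for every $t\ge T+1$ and is then constant. Since $N_t\le T<2^m$ there is never an overflow, and since an increment by $x^{(t)}=0$ is the identity, the counter freezes automatically once the input has ended; the work is therefore in (i) carrying out each increment within essentially one time step and (ii) controlling the latency.

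For (i), let $c_0^{(t)},\dots,c_{m-1}^{(t)}$ be the current bits. Adding $x^{(t)}\in\{0,1\}$ flips bit $j$ exactly when a carry reaches position $j$, i.e.\ when $x^{(t)}=c_0^{(t)}=\dots=c_{j-1}^{(t)}=1$. The key point -- and what lets us update all digits at once instead of rippling the carry across time as in the $O(\log T)$-time scheme of Hitron and Parter -- is that this ``carry into position $j$'' is a conjunction of $j{+}1$ bits, hence a \emph{single} Heaviside unit $\Theta\!\big(x^{(t)}+\sum_{i<j}c_i^{(t)}-j\big)$, no matter how long the carry chain is. The one subtlety is that the new value of bit $j$ is $c_j\oplus\mathrm{carry}_j$, and exclusive-or is not linearly separable, so it is not itself the output of a single threshold unit; I would split it across two auxiliary neurons, $u_j:=c_j\wedge\lnot\mathrm{carry}_j$ and $v_j:=\lnot c_j\wedge\mathrm{carry}_j$, each of which \emph{is} a single Heaviside unit over $\{c_0,\dots,c_j\}\cup\{x\}$ (compare $\sum_{i<j}c_i+x$ with $j$, putting a large positive weight on $c_j$ for $u_j$ and a large negative one for $v_j$), and then recover the updated bit as $c_j'=u_j\vee v_j=\Theta(u_j+v_j-\tfrac12)$ -- again one unit, because $u_j$ and $v_j$ are never both $1$. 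Adding a couple of neurons that delay $x$ so that the correct input bit is visible at the right time, one gets a network with $O(\log T)$ neurons.

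I expect (ii) to be the main obstacle. Because $\oplus$ forces one extra internal layer, the counter cannot be kept perfectly in phase with the input while using only $O(\log T)$ neurons -- consistent with the paper's lower bound, which says a network that settled in time $0$ would need at least $T$ neurons, so some delay is forced. I would therefore carry the invariant that the counter holds $\mathrm{bin}(N_{t-1})$ (a one-step lag) and prove, by induction on $t$, that as the $u_j,v_j$ layer and the bit layer alternate, exactly one input bit is absorbed per step -- none skipped, none double-counted -- so that this single step of lag is the \emph{only} lag and is exactly absorbed by the $t\ge T+1$ window the problem allows. Making this bookkeeping precise, and checking the boundary cases (the zero initialization at $t=0$, the first nonzero input bit, and the tail where $x\equiv 0$), is the bulk of the work; the ``carry is an AND'' observation and the $u_j,v_j$ trick are what make it fit in time $1$ rather than something larger.
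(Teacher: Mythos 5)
Your key observations are right: the carry into position $j$ is a conjunction and hence a single threshold unit, so $O(\log T)$ neurons suffice in principle; and since XOR is not linearly separable, some decomposition is needed. But the specific architecture you propose is broken, and the gap lies precisely at the step you flag as "the bulk of the work" and then assert without proof.

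You define a three-layer loop $c\to(u,v)\to c$: the neurons $u_j,v_j$ read $c_0,\dots,c_j,x$, and $c_j$ reads $u_j,v_j$. In the synchronous model this is a two-step feedback path: $c_j^{(t+1)}$ depends on $(u_j,v_j)^{(t)}$, which depends on $c^{(t-1)}$ and $x^{(t-1)}$. So if $a_t$ denotes the count encoded by $c^{(t)}$, the recurrence is $a_{t+1}=a_{t-1}+x^{(t-1)}$, not $a_{t+1}=a_t+x^{(t)}$. With $a_0=a_1=0$ one gets $a_{2k}=x^{(0)}+x^{(2)}+\cdots$ and $a_{2k+1}=x^{(1)}+x^{(3)}+\cdots$: two interleaved half-counters, each absorbing every other input bit. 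Your central claim, "exactly one input bit is absorbed per step -- none skipped, none double-counted," is false for this network, and the proposed remedy of inserting neurons to delay $x$ cannot help, since it only shifts which inputs each half-counter sees. This is a genuine gap, not a bookkeeping issue; the architecture must change.

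For comparison, the paper sidesteps the XOR obstruction without ever routing the new bit through an auxiliary layer back to the bit neuron. Its $z_i$ \emph{self-excites} (so it need not be recomputed), and a flag neuron $in_i$, recomputed fresh each step from $z$, $f$, and $x$, inhibits $z_i$ one step later exactly when bit $i$ should flip from $1$ to $0$. The price is that the delayed reset is too slow for bits $0$ and $1$, which flip every one or two steps; the paper therefore replaces those two bits with a one-hot mod-$4$ subnetwork $f_0,\dots,f_3$ and uses $f_3=1,f_0=0$ to signal the carry out of the low bits. The network is then in an "unclean" but still decodable state during input and relaxes to exact binary one step after the input ends, which is where the time-$1$ bound comes from. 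It is worth noting that your $u_j,v_j$ idea can be repaired: discard $c_j$ entirely, take $(u_j,v_j)$ as the state, and let the carry threshold read $u_i+v_i$ in place of $c_i$ (valid because $u_i,v_i$ are never both $1$). One can check that $u_j^{(t+1)}=\Theta\bigl((j{+}1)(u_j^{(t)}+v_j^{(t)})-x^{(t)}-\sum_{i<j}(u_i^{(t)}+v_i^{(t)})-\tfrac12\bigr)$ and $v_j^{(t+1)}=\Theta\bigl(-(u_j^{(t)}+v_j^{(t)})+x^{(t)}+\sum_{i<j}(u_i^{(t)}+v_i^{(t)})-j-\tfrac12\bigr)$ give a genuine one-step recurrence with $u_j^{(t)}+v_j^{(t)}$ always equal to bit $j$ of $N_t$, and all $v_j$ clear one step after the last spike, matching the time-$1$ bound. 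But that construction is not what your write-up describes, and the induction you promise does not close for the network you actually built.
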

It is easy to see that we also have the corresponding information-theoretical lower bound on the number of neurons all being $\Omega(\log T)$ by the requirements of the tasks.

In terms of time bound, we also show that our networks are optimal for FCSC and TSC problem in the following sense:
\begin{theorem}
\label{FCSClower}
There does not exists a network with less than $T$ neurons that solves FCSC(t) problem in time $0$ for all $0\leq t\leq T$.
\end{theorem}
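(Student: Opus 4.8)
I would argue by contradiction and extract a monotonicity constraint on the output code. Suppose a network with fewer than $T$ neurons solved FCSC$(t)$ in time $0$ for every $0\le t\le T$. For a $0/1$ string $p$ with $|p|\le T$, the input ``$p$ followed by permanent silence'' is legal for FCSC$(|p|)$, so the time-$0$ property pins the output vector to $F_{|p|}(L_p)$ at all times $\ge|p|$, where $L_p$ is the length of the first maximal run of spikes in $p$. First I would check that these readouts are consistent: a legal input for FCSC$(t)$ is legal for FCSC$(t')$ with the same $L_x$ whenever $t\le t'\le T$, and every value of $L_x$ in $\{0,\dots,t\}$ is realized there (by $0^\infty$ and by the inputs $1^{L}0^\infty$), so all the $F_t$ are restrictions of a single injection $F:=F_T\colon\{0,\dots,T\}\to\{0,1\}^m$, where $m$ is the number of output neurons.

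The heart of the argument isolates the output neurons. On input bit $b\in\{0,1\}$ the joint activity $s\in\{0,1\}^{N-1}$ of the non-input neurons updates to $\Phi_b(s)$, whose $z$-coordinate is $\Theta\bigl(\psi_z(s)+w_{xz}\,b\bigr)$, where $\psi_z$ is an affine function of $s$ alone and $w_{xz}$ is the weight from the input neuron to $z$; the model's convention gives $s^{(0)}=\mathbf 0$. Let $\sigma_L:=\Phi_1^{L}(\mathbf 0)$ be the state after reading $1^{L}$. Since $1^{L}0^\infty$ has $L_x=L$ we get $\mathrm{out}(\sigma_L)=F(L)$; since reading one further $0$ leaves the first run of spikes equal to $1^{L}$ we also get $\mathrm{out}\bigl(\Phi_0(\sigma_L)\bigr)=F(L)$; and $\sigma_{L+1}=\Phi_1(\sigma_L)$ gives $\mathrm{out}(\sigma_{L+1})=F(L+1)$; finally $F(0)=\mathrm{out}(\mathbf 0)=\mathbf 0$ because output neurons are silent at time $0$. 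Reading these off at a single output neuron $y_i$ yields, for every $0\le L<T$,
\[
F(L)_i=\Theta\bigl(\psi_{y_i}(\sigma_L)\bigr),\qquad F(L+1)_i=\Theta\bigl(\psi_{y_i}(\sigma_L)+w_{xy_i}\bigr),
\]
with the same scalar $w_{xy_i}$ for all $L$. As $\Theta$ is nondecreasing, the sign of $w_{xy_i}$ forces $F(\cdot)_i$ to be monotone: nondecreasing if $w_{xy_i}\ge 0$, and nonincreasing — hence, since $F(0)_i=0$, identically zero — if $w_{xy_i}<0$. In all cases $F(L)_i\le F(L+1)_i$, so $F(0)\le F(1)\le\cdots\le F(T)$ coordinatewise; injectivity of $F$ makes these strict in Hamming weight, so $0=|F(0)|<|F(1)|<\cdots<|F(T)|\le m$, whence $m\ge T$ and the network has at least $T$ neurons — a contradiction.

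The conceptual step I expect to be the crux is noticing that it suffices to watch the output neurons and that each output neuron's behaviour is governed by the \emph{sign} of its single weight from the input neuron; this is what converts ``count the length of the first run of spikes'' into a strictly increasing chain in the Boolean cube and thus forces an essentially unary (``thermometer'') encoding. The remaining steps are routine bookkeeping: gluing the $F_t$ into one injective $F$ (this is where the ``for all $t$'' hypothesis is really used — with only FCSC$(T)$ the states $\sigma_0,\sigma_1,\dots$ need not exhibit the required chain structure), verifying $F(0)=\mathbf 0$ from the $z^{(0)}=0$ initialization, and observing that a lower bound on the number of output neurons already bounds the total. I would also remark that the proof invokes only FCSC$(L)$ for $0\le L\le T$ and in fact shows the stronger fact that any time-$0$ solution must encode the count monotonically.
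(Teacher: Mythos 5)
Your proof is correct and takes essentially the same route as the paper. Both arguments isolate the output neurons, compare the Heaviside argument with and without the $x$-term along the trajectory on $1^L 0^\infty$, and conclude that $F(0),\dots,F(T)$ form a coordinatewise-increasing (hence Hamming-weight-increasing) chain in $\{0,1\}^m$, forcing $m\ge T$; the only presentational difference is that you derive the per-coordinate monotonicity of $F$ in one shot from the sign of $w_{xy_i}$, whereas the paper packages the same observation as an induction on $t$ on the firing sets $S_t$, carrying the auxiliary invariant that every output neuron that has fired so far has $w_{xy_j}>0$.
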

\begin{theorem}
\label{TSClower}
There does not exists a network with less than $T$ neurons that solves TSC(t) problem in time $0$ for all $0\leq t\leq T$.
\end{theorem}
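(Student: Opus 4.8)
The plan is to show that in any time\nobreakdash-$0$ solution every output bit is a \emph{monotone} function of the spike count, so that the codewords $F(0),\dots,F(T)$ form a chain of length at most $m+1$ in $\{0,1\}^m$; injectivity of $F$ then forces $m\ge T$, so the network has at least $T$ neurons.

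First I would pin down a single code. Suppose a network with output neurons $y_0,\dots,y_{m-1}$ solves the TSC($t$) problem in time $0$ for every $0\le t\le T$, with injections $F_t\colon\{0,\dots,t\}\to\{0,1\}^m$. Feeding, for $c\le t'\le t\le T$, the input $A_c$ that spikes exactly at times $0,\dots,c-1$ and is silent afterwards — a legal input for TSC($t'$) and for TSC($t$), with $c$ spikes in each of the windows $[0,t')$ and $[0,t)$ — shows that its stabilized output equals both $F_{t'}(c)$ and $F_t(c)$, hence $F_{t'}(c)=F_t(c)$. So there is one injective $F\colon\{0,\dots,T\}\to\{0,1\}^m$ with $F=F_T$ on its domain, and the network's output at every time at or past the end of an input equals $F$ of that input's total spike count.

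Next I would exploit the depth\nobreakdash-one, memoryless structure. Fix $s\in\{1,\dots,T\}$ and compare $A_s$ (spikes at $0,\dots,s-1$; count $s$) with $B_s$ (spikes at $0,\dots,s-2$, with $x^{(s-1)}=0$, silent afterwards; count $s-1$). Both are legal for TSC($s$), so their time\nobreakdash-$s$ outputs are $F(s)$ and $F(s-1)$. Since the model is memoryless and the two inputs agree at times $0,\dots,s-2$, a short induction on time gives that every non\nobreakdash-input neuron has the same value at time $s-1$ under $A_s$ and $B_s$; the inputs differ at time $s-1$ only in that $x^{(s-1)}=1$ for $A_s$ and $x^{(s-1)}=0$ for $B_s$. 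Hence, for each output neuron $y_i$, its firing at time $s$ equals $\Theta\!\left(w_{xy_i}\,x^{(s-1)}+r_{i,s}\right)$, where the remaining weighted input minus threshold, $r_{i,s}$, is \emph{the same} for both inputs; so $F(s)_i=\Theta(w_{xy_i}+r_{i,s})$ and $F(s-1)_i=\Theta(r_{i,s})$. Because $\Theta$ is non\nobreakdash-decreasing and $w_{xy_i}$ does not depend on $s$, the sign of $F(s)_i-F(s-1)_i$ is fixed over all $s$ (it is that of $w_{xy_i}$, or $0$), so each coordinate sequence $F(0)_i,\dots,F(T)_i$ is monotone.

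Finally I would combine monotonicity with injectivity. Coordinatewise monotonicity means total variation equals net change, hence $\sum_{s=1}^{T} d\bigl(F(s-1),F(s)\bigr)=d\bigl(F(0),F(T)\bigr)\le m$ for Hamming distance $d$, while injectivity of $F$ gives $d\bigl(F(s-1),F(s)\bigr)\ge 1$ for each $s$; so $m\ge T$, and since the output neurons are among the network's neurons, it has at least $T$ of them. I expect the monotonicity step to be the crux: the content is that the depth\nobreakdash-one, memoryless architecture forces each output bit, viewed as a function of the last input spike, to be a threshold function whose only dependence on $s$ is through a shared additive offset $r_{i,s}$, with the \emph{same} coefficient $w_{xy_i}$ for every $s$ — this is what makes $c\mapsto F(c)$ coordinatewise monotone and shrinks the usable codebook from $2^m$ down to a chain of length $m+1$. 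The remaining points are routine: the induction that $A_s$ and $B_s$ share the time\nobreakdash-$(s-1)$ state (with the $s=1$ case being the all\nobreakdash-zero initial state) and the telescoping in the last step.
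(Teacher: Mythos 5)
Your argument is correct, and it rests on the same mechanism as the paper's proof: in a depth-one, memoryless network that must have already stabilized at the instant the input window closes, the effect of the final input spike on an output neuron $y_i$ is an additive $w_{xy_i}$ on top of a baseline $r_{i,s}$ that is identical whether or not that last spike occurred; that single additive shift is what squeezes the codebook down from $2^m$ to a chain. The packaging, though, differs in a few ways worth noting. The paper fixes the all-ones input, defines $S_t=\{y_i : y_i^{(t)}=1\}$, and runs an induction with the strengthened hypothesis ``$S_t\supsetneq\cdots\supsetneq S_0=\emptyset$ and $w_{xy_j}>0$ for $y_j\in S_{t-1}$'': the stabilization condition at step $t+1$ keeps old firers firing, and comparing stabilization at $t-1$ against firing at $t$ yields $w_{xy_j}>0$ for new firers, giving a chain of length $T+1$ from $\emptyset$. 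You instead prove coordinatewise monotonicity directly by comparing the two inputs $A_s$ and $B_s$ and then telescope total variation against Hamming distance. Your version buys three things: it handles both signs of $w_{xy_i}$ symmetrically (the paper's strict-chain framing only works because $S_0=\emptyset$ forces all coordinates to be non-decreasing, which the paper secures by proving $w_{xy_j}>0$), it avoids the nested induction in favor of a one-shot counting bound, and — most usefully — it makes explicit a step the paper leaves implicit: that the per-horizon codes $F_t$ must agree on their common domains, which is genuinely needed before one can speak of ``the'' value $F(s)$ and compare outputs at different times. That said, the paper's argument does pin down the extra structural fact $w_{xy_j}>0$ on every eventually-firing output neuron, which your version deliberately does not need.
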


\section{First Consecutive Spikes Counting}

We present the constructions in two stages. At the first stage, we count consecutive spikes in binary transiently. At the second stage, we transform the transient firing into persistent firing. By composing the two stages, we get our desired network.

\subsection{First Stage: Counter Network}  The network contains neurons $z_0, \dotsb, z_n$, $in_1, \dotsb, in_n$ and we build the network inductively. To construct mod $2$ Base Network which counts mod $2$, we have 
\[w_{xz_0} = 1, w_{z_0z_0} = -1, b_{z_0} = 0.5.\]
\begin{figure}
\includegraphics[width=3.3cm]{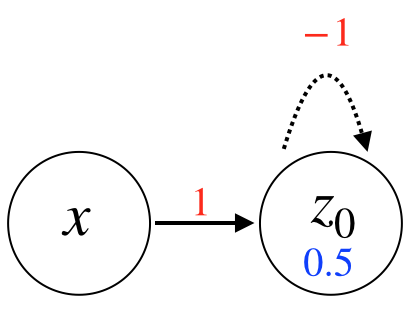}
\centering
\caption{mod $2$ Base Network}
\end{figure}
By noticing that for $t\geq 1$, $z_0^{(t)} = 1$ if and only if $x^{(t-1)} = 1$ and $z_0^{(t-1)} = 0$, we have the following lemma
\begin{lemma}
\label{base}
For the mod $2$ base network, given $t\geq 0$ if for all $t' such that 0\leq t' \leq t$ we have $x^{(t')} = 1$, then at time $t$, $z_0^{(t)} = t \bmod 2$. 
\end{lemma}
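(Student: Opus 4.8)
The plan is a straightforward induction on $t$, using the characterization of $z_0^{(t)}$ already recorded just before the lemma statement, namely that for $t\geq 1$ we have $z_0^{(t)}=1$ if and only if $x^{(t-1)}=1$ and $z_0^{(t-1)}=0$ (equivalently $z_0^{(t)}=\Theta(x^{(t-1)}-z_0^{(t-1)}-0.5)$ by unfolding the definitions of $w_{xz_0}$, $w_{z_0z_0}$, $b_{z_0}$).

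For the base case $t=0$: since $z_0$ is not an input neuron, the model stipulates $z_0^{(0)}=0$, and $0\bmod 2=0$, so the claim holds (vacuously on the hypothesis side, or trivially). For the inductive step, assume the statement holds at $t-1$ and suppose $x^{(t')}=1$ for all $0\leq t'\leq t$. In particular $x^{(t')}=1$ for all $0\leq t'\leq t-1$, so by the induction hypothesis $z_0^{(t-1)}=(t-1)\bmod 2$. Now split into two cases according to the parity of $t$: if $t$ is odd then $z_0^{(t-1)}=0$, and since $x^{(t-1)}=1$ the characterization gives $z_0^{(t)}=1=t\bmod 2$; if $t$ is even then $z_0^{(t-1)}=1$, and the characterization gives $z_0^{(t)}=0=t\bmod 2$. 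This closes the induction.

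I do not anticipate a real obstacle here; the only point requiring a little care is the boundary between $t=0$ (where the value of $z_0$ is imposed by the initialization convention rather than by the recurrence, which is only defined for $t\geq 1$) and $t\geq 1$ (where the recurrence applies). As long as the base case is stated separately and the inductive step is only invoked for $t\geq 1$, the argument goes through cleanly. One could equivalently phrase the whole thing as: $z_0$ toggles its value each step as long as $x$ fired on the previous step and, since $z_0$ starts at $0$, after $t$ consecutive toggles its value is the parity of $t$.
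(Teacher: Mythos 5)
Your proof is correct and matches the approach the paper intends: the paper states the one-step characterization ($z_0^{(t)}=1$ iff $x^{(t-1)}=1$ and $z_0^{(t-1)}=0$) just before the lemma and leaves the (routine) toggling induction implicit, which is exactly what you carry out, with the base case $z_0^{(0)}=0$ handled by the initialization convention.
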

Now we iteratively build the network where $1\leq i \leq n$ on top of the mod $2$ base network with the following rule:
\[
w_{xz_i} = i + 1,\ w_{z_jz_i} = 1,\ \forall j, 0\leq j<i, w_{z_kin_i} = 1, \forall k, 0<k\leq i,w_{in_iz_i} = -i-1, w_{z_iz_i} = i
\]
\[
b_{z_i} =  2i + 0.5,\ b_{in_i} = i - 0.5.
\]
\begin{figure}
\includegraphics[width=9.8cm]{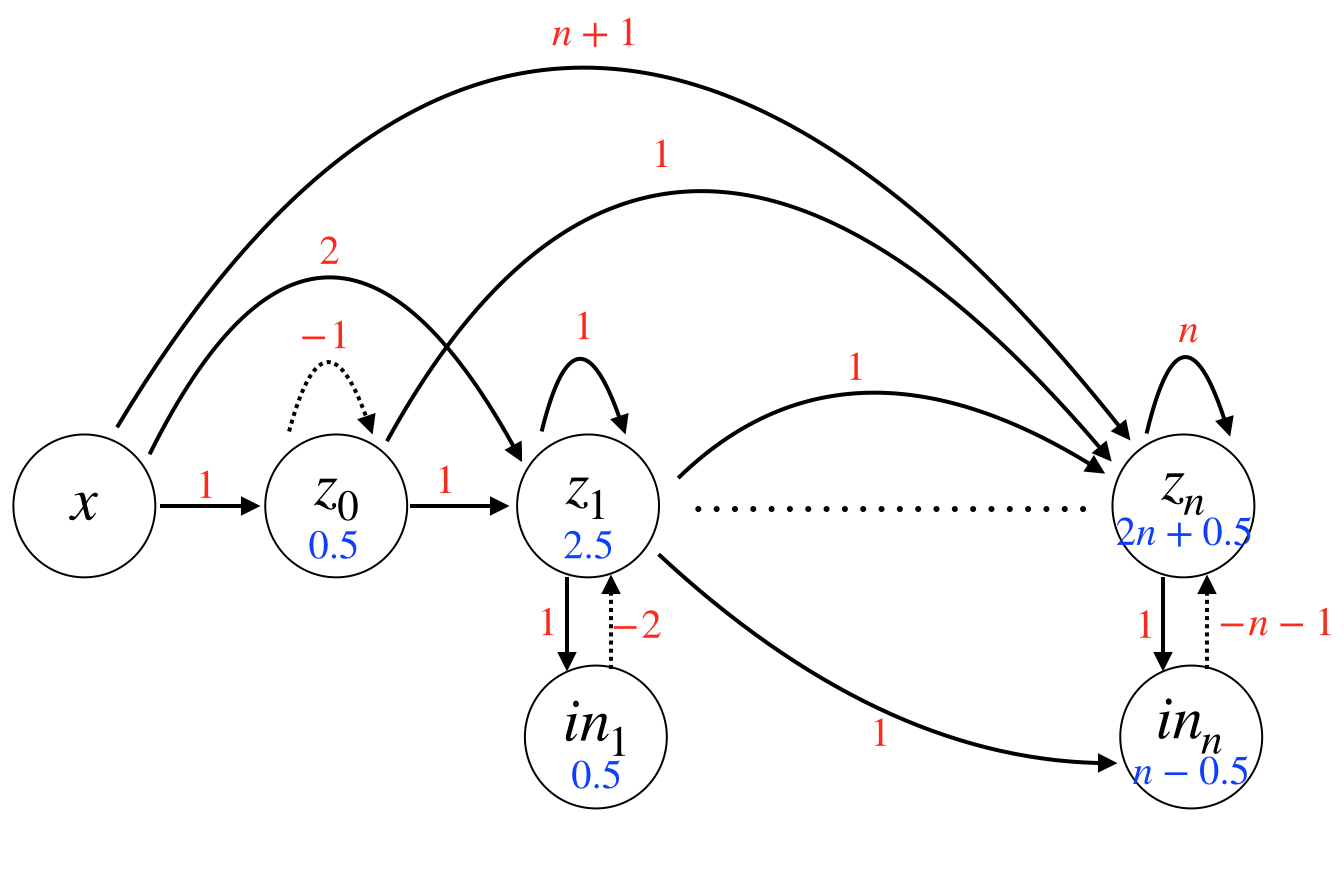}
\centering
\caption{First Stage}
\end{figure}
This completes the construction. From the construction, we can deduce the following lemma.
\begin{lemma}
\label{firing1}
For $i > 0$, neurons $z_i, in_i$ fire according to the following rules: 
\begin{enumerate}
    \item $z_i^{(t)} =1$ if and only if $x^{(t-1)} = 1$, $in_i^{(t-1)} = 0$, and (either for all $j,  0\leq j<i$ we have $z_j^{(t-1)} = 1$ or $z_i^{(t-1)} = 1$).
    \item $in_i^{(t)} = 1$ if and only if for all $j, 1\leq j\leq i$ we have $z_j^{(t-1)} = 1$.
\end{enumerate}
\end{lemma}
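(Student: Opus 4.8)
The plan is to unwind the memoryless update rule $z^{(t)} = \Theta\bigl(\sum_{y\in P_z} w_{yz}\, y^{(t-1)} - b_z\bigr)$ for each of the two neurons $z_i$ and $in_i$ (with $i>0$ and $t\geq 1$; for $t=0$ both are $0$ by convention), and then verify the two claimed firing rules by a short case analysis on the relevant binary values at time $t-1$.

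I would dispatch $in_i$ first, since it is immediate. By construction the presynaptic neurons of $in_i$ are exactly $z_1,\dots,z_i$, each with weight $1$, and $b_{in_i}=i-0.5$, so $in_i^{(t)}=\Theta\bigl(\sum_{k=1}^{i}z_k^{(t-1)}-(i-0.5)\bigr)$. As $\sum_{k=1}^{i}z_k^{(t-1)}$ is an integer in $\{0,\dots,i\}$, the argument is positive iff this sum equals $i$, i.e.\ iff $z_k^{(t-1)}=1$ for all $1\leq k\leq i$; that is rule~2.

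For $z_i$, the presynaptic neurons are $x$ (weight $i+1$), $z_0,\dots,z_{i-1}$ (each weight $1$), $in_i$ (weight $-(i+1)$), and $z_i$ itself (weight $i$), with $b_{z_i}=2i+0.5$. Writing $a=x^{(t-1)}$, $b=in_i^{(t-1)}$, $c=z_i^{(t-1)}$, and $S=\sum_{j=0}^{i-1}z_j^{(t-1)}\in\{0,\dots,i\}$, the pre-activation of $z_i$ is $(i+1)a+S-(i+1)b+ic-2i-0.5$. I would split on $a$ and $b$: if $a=0$ the value is at most $i-2i-0.5<0$, so firing forces $x^{(t-1)}=1$; if $a=1$ and $b=1$ the value is $S+ic-2i-0.5\leq -0.5<0$, so firing forces $in_i^{(t-1)}=0$; and if $a=1$, $b=0$ the value is $S+ic-i+0.5$, which for $c=1$ equals $S+0.5>0$ and for $c=0$ is positive iff $S\geq i$, i.e.\ iff $z_j^{(t-1)}=1$ for all $0\leq j<i$. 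Collecting these cases is exactly rule~1.

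I do not expect a genuine obstacle: the lemma merely certifies that the weights and thresholds were chosen correctly, so the argument is bookkeeping. The one delicate point is the interplay of the self-loop (weight $i$ on $z_i$) with the other excitatory edges and with the strong inhibitory edge from $in_i$ (weight $-(i+1)$): one has to confirm that $b_{z_i}=2i+0.5$ is large enough that neither the $i$ units contributed by $z_i$ alone, nor the at most $i$ units from $z_0,\dots,z_{i-1}$ together with the $i+1$ from $x$, crosses threshold unless the full disjunction holds, while $in_i$ firing cancels the $i+1$ from $x$ outright. Working with the half-integer thresholds throughout keeps every inequality strict, which is what makes $\Theta$ act as a clean indicator.
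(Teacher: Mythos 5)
Your proof is essentially the same as the paper's: expand the linear pre-activation of each neuron and do a finite case analysis on the binary values at time $t-1$. The $in_i$ part is identical. For $z_i$ your organization---splitting on $a=x^{(t-1)}$, $b=in_i^{(t-1)}$, and then $c=z_i^{(t-1)}$---is a cleaner packaging of the paper's separate ``only if (contrapositive)'' and ``if'' cases, but it is the same bookkeeping. One small arithmetic slip: in the $a=0$ branch the pre-activation is $S-(i+1)b+ic-2i-0.5$, which is maximized at $b=0$, $c=1$, $S=i$, giving $-0.5$, not $i-2i-0.5=-(i+0.5)$; you appear to have dropped the $ic$ contribution there, so the stated quantity is not actually an upper bound. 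The conclusion you draw from it (the pre-activation is negative, so $x^{(t-1)}=1$ is forced) is still correct, so nothing downstream breaks.
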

\begin{proof}
\textbf{Case (1):}
The potential of $z_i^{(t)}$ is
\begin{multline*}
w_{xz_i}x^{(t-1)} + \sum_{j=0}^{i-1}w_{z_jz_i}z_j^{(t-1)} + w_{in_iz_i}in_i^{(t-1)} + w_{z_iz_i}z_i^{(t-1)} \\= (i+1)x^{(t-1)} + \sum_{j=0}^{i-1}z_j^{(t-1)} -(i+1)in_i^{(t-1)} + iz_i^{(t-1)}\,.    
\end{multline*}
\textbf{Only if:}
Let's show the only if direction for the firing rule of $z_i^{(t)}$ by proving the contrapositive.\\
If $x^{(t-1)} = 0$, then the potential of $z_i^{(t)}$ is
\[
\sum_{j=0}^{i-1}x_j^{(t-1)} -(i+1)in_i^{(t-1)} + iz_i^{(t-1)}\leq 2i < 2i+0.5 = b_{z_i}.
\] 
If $in_i^{(t-1)} = 1$, then the potential of $z_i^{(t)}$ is
\[
(i+1)x^{(t-1)} + \sum_{j=0}^{i-1}z_j^{(t-1)} -(i+1)+ iz_i^{(t-1)}\leq 2i < 2i+0.5 = b_{z_i}.
\]
If there exists $\hat{j}, 0\leq \hat{j}<i$ such that $z_{\hat{j}}^{(t-1)} = 0$ and $z_i^{(t-1)} = 0$, then the potential of $z_i^{(t)}$ is
\[
\sum_{j\neq \hat{j}, 0\leq j\leq i-1}z_j^{(t-1)} + (i+1)x^{(t-1)} -(i+1)in_i^{(t-1)} \leq 2i < 2i+0.5 = b_{z_i}.
\]
In all three cases, we have $z_i^{(t)} = 0$.\\
\textbf{If:}
For the if direction, if $x^{(t-1)} = 1$, $in_i^{(t-1)} = 0$ and for all $j, 0\leq j<i$ we have $z_j^{(t-1)} = 1$, then the potential of $z_i^{(t)}$ is
\[
(i+1) + \sum_{j=0}^{i-1}1  + iz_i^{(t-1)}\geq 2i+1 > 2i+0.5 = b_{z_i}.
\]
If $x^{(t-1)} = 1$, $in_i^{(t-1)} = 0$ and $z_i^{(t-1)} = 1$, then the potential of $z_i^{(t)}$ is
\[
(i+1) + \sum_{j=0}^{i-1}z_j^{(t-1)} + i\geq 2i+1 > 2i+0.5 = b_{z_i}.
\]
In both cases, we have $z_i^{(t)} = 1$.

\textbf{Case (2):}
The firing rule of $in_i^{(t)}$ can be analyzed similarly. 

The potential of $in_i^{(t)}$ is
\[
\sum_{j=1}^iw_{z_jin_i}z_j^{(t-1)} = \sum_{j=1}^iz_j^{(t-1)}.
\]
\textbf{Only If:}
For the only if direction, if there exists $\hat{j}, 1\leq \hat{j}\leq i$ such that $x_{\hat{j}}^{(t-1)} = 0$, then the potential of $in_i^{(t)}$ is
\[
\sum_{j\neq \hat{j}, 1\leq j\leq i}z_j^{(t-1)}\leq i-1 < i-0.5 = b_{in_i}.
\]
We have $in_i^{(t)} = 0$.\\ 
\textbf{If:}
For the if direction, if for all $j, 1\leq j\leq i$ we have $z_j^{(t-1)} = 1$, then the potential of $in_i^{(t)}$ is
\[
\sum_{j = 1}^i1 = i > i-0.5 = b_{in_i}.
\]
We have $in_i^{(t)} = 1$ as desired.
\end{proof}

Using the above lemma, we can verify that indeed the network at the first stage fires in binary, with $z_i$ encoding the $i$th digit in the binary representation. 
\begin{theorem}
\label{binary}
Given $i\geq 1$ and $t\geq 0$, if for all $t'$ such that $0\leq t' \leq t$ we have $x^{(t')} = 1$, then
\begin{enumerate}
    \item $z_i^{(t)} = a_i$ for $t = \sum_{j=0}^\infty a_j2^j$ where $a_j\in \lbrace 0, 1\rbrace$.
    \item $in_i^{(t)} = 1$ if and only if $t \bmod 2^{i+1} = 2^{i+1} - 1$ or $0$. 
\end{enumerate}
\end{theorem}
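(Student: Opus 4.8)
The plan is to prove parts (1) and (2) together by strong induction on $t$, using only the firing rules of \autoref{firing1}, the base-case description in \autoref{base}, and the convention $z^{(0)} = 0$; at stage $t$ the inductive hypothesis supplies (1) and (2) for all $i$ and all $t' < t$. The base case $t = 0$ is immediate for (1), since $z_i^{(0)} = 0$ is exactly the $i$-th binary digit of $0$; for (2) the claimed equivalence is meant for $t \ge 1$ (at $t = 0$ the network sits in its all-zero reset state $in_i^{(0)} = 0$), and the instance $t = 1$ is checked directly. Note also that (1) for $i = 0$ is literally \autoref{base}, so throughout we may use $z_0^{(t')} = t' \bmod 2$.

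I would handle (2) first, as it feeds into (1). By rule (2) of \autoref{firing1}, $in_i^{(t)} = 1$ iff $z_j^{(t-1)} = 1$ for all $1 \le j \le i$; by part (1) of the inductive hypothesis at time $t-1$ this says digits $1, \dots, i$ of $t-1$ are all $1$, i.e. $(t-1) \bmod 2^{i+1} \in \{2^{i+1}-2,\ 2^{i+1}-1\}$ (digit $0$ being unconstrained), which is equivalent to $t \bmod 2^{i+1} \in \{2^{i+1}-1,\ 0\}$. Shifting the index once more, $in_i^{(t-1)} = 1$ iff $t \bmod 2^{i+1} \in \{0, 1\}$, which is the form part (1) will need.

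The substance of the argument is (1). Since $x^{(t-1)} = 1$ by hypothesis, rule (1) of \autoref{firing1} becomes: $z_i^{(t)} = 1$ iff $in_i^{(t-1)} = 0$ and (either $z_j^{(t-1)} = 1$ for all $j < i$, or $z_i^{(t-1)} = 1$). I would split on whether a carry enters position $i$ when incrementing $t-1$ to $t$ — equivalently, by the inductive hypothesis, whether $z_j^{(t-1)} = 1$ for all $j < i$, equivalently $t \equiv 0 \pmod{2^i}$, i.e. $t \bmod 2^{i+1} \in \{0, 2^i\}$. In the carry case the disjunction holds, and using the above characterization of $in_i^{(t-1)}$ one checks that $in_i^{(t-1)} = 1$ exactly when $t \bmod 2^{i+1} = 0$, whereupon $z_i^{(t)} = 0$ matches digit $i$ of $t$, while if $t \bmod 2^{i+1} = 2^i$ then $in_i^{(t-1)} = 0$ and $z_i^{(t)} = 1$ again matches. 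In the no-carry case digit $i$ of $t$ equals digit $i$ of $t-1$; if $z_i^{(t-1)} = 0$ then $z_i^{(t)} = 0$ trivially, and if $z_i^{(t-1)} = 1$ then digit $i$ of $t$ is $1$, so $t \bmod 2^{i+1} \ge 2^i \ge 2$ forces $in_i^{(t-1)} = 0$ and hence $z_i^{(t)} = 1$, as needed.

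I expect the main obstacle to be bookkeeping rather than insight: the reset neuron $in_i$ fires on the basis of the $z_j$'s \emph{two} steps in the past while the carry-propagation condition looks \emph{one} step in the past, so the proof must track $t \bmod 2^{i+1}$, $t \bmod 2^i$, and the $i$-th digits of $t$ and $t-1$ simultaneously and confirm that the "clear the low bits on overflow" event produced by $in_i$ coincides exactly with the carry out of position $i$. Casting the inductive step as the carry / no-carry dichotomy above is what keeps that verification finite and routine.
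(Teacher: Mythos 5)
Your proof is correct and follows essentially the same strategy as the paper: induction on $t$ using the firing rules of Lemma~\ref{firing1}, with the case analysis governed by $t \bmod 2^{i+1}$; your carry/no-carry dichotomy is simply a cleaner packaging of the paper's five explicit cases, and proving (2) before (1) within each step is an equivalent ordering of the same argument. You also correctly observe that part (2) fails literally at $t = 0$ (since $in_i^{(0)} = 0$ by the reset convention while $0 \bmod 2^{i+1} = 0$); the paper's proof implicitly sidesteps this by starting its induction at $t = 1$.
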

\begin{proof}
First, let's verify that the claim is true for $z_0$. Since for all $t',0\leq t' \leq t$ we have $x^{(t')} = 1$, $z_0^{(t')} = 1$ if and only if $z_0^{(t'-1)} = 0$. This implies exactly $z_0^{(t)} = t \bmod 2$ as desired (for all the modular arithematic at this work, we choose the smallest nonnegative number from the equivalence class). Now let's do the induction on $t$ and we will verify the induction by checking $z_i, in_i$ fires in according to the induction hypothesis for all $i\geq 1$. When $t=1$, the induction statement is trivially satisfied for all $i\geq 1$. Fix $i$, we have the following cases:
\begin{enumerate}
  \item $0<t \bmod 2^{i+1} < 2^i, z_i^{(t-1)} = 0$:\\ This implies that $0\leq t-1 \bmod 2^i < 2^i-1$. By induction hypothesis, not all $z_j^{(t-1)} = 1$ for $0\leq j < i$. Now by Lemma \ref{firing1}, we have $z_i^{(t)} = 0 = a_i, in_i^{(t)} = 0$ as desired.
  \item $t \bmod 2^{i+1} = 2^i, z_i^{(t-1)} = 0, in_i^{(t-1)} = 0$:\\ This implies that $t-1 \bmod 2^i = 2^i-1$. By induction hypothesis, for all $j, 0\leq j < i$ we have $z_j^{(t-1)} = 1$. Now by Lemma \ref{firing1}, we have $z_i^{(t)} = 1 = a_i, in_i^{(t)} = 0$ as desired.
  \item $2^i<t \bmod 2^{i+1} < 2^{i+1} - 1, z_i^{(t-1)} = 1, in_i^{(t-1)} = 0$:\\ This implies that $0\leq t-1 \bmod 2^i < 2^i-2$. By induction hypothesis, not all $j, 1\leq j < i$ we have $z_j^{(t-1)} = 1$. Now by Lemma \ref{firing1}, we have $z_i^{(t)} = 1 = a_i, in_i^{(t)} = 0$ as desired.
  \item $t \bmod 2^{i+1} = 2^{i+1} - 1, z_i^{(t-1)} = 1, in_i^{(t-1)} = 0$:\\ This implies that $t-1 \bmod 2^i = 2^i-2$. By induction hypothesis, for all $j, 1\leq j < i$ we have $z_j^{(t-1)} = 1$. Now by Lemma \ref{firing1}, we have $z_i^{(t)} = 1 = a_i, in_i^{(t)} = 1$ as desired.
  \item $t \bmod 2^{i+1} = 0, z_i^{(t-1)} = 1, in_i^{(t-1)} = 1$:\\ This implies that $t-1 \bmod 2^i = 2^i-1$. By induction hypothesis, for all $j, 1\leq j < i$ we have $z_j^{(t-1)} = 1,$. Now by Lemma \ref{firing1}, we have $z_i^{(t)} = 0 = a_i, in_i^{(t)} = 1$ as desired.
\end{enumerate}
This completes the induction.
\end{proof}

\subsection{Second Stage: Capture Network} Now the second stage is a simple ``capture network" with input neurons $x$, $z_i$ for all $i, 0\leq i \leq n$, output neurons $y_i$ for $0\leq i \leq n$ and an auxilary neuron $s$. Intuitively, the network persistently captures the state of $z_i$ for all $i, 0\leq i\leq n$ into $y_i$ for all $i, 0\leq i\leq n$. We will specify the timing of the states of $z_i$ being captured later. The network is defined as the following:
\begin{multline*}
\forall\, 0\leq i\leq n, w_{xy_i} = -2, w_{y_iy_i} = 4, w_{z_iy_i} = 1, w_{z_is} = w_{y_is} = 1, w_{sy_i} = -1.5,
\end{multline*}
and
\[
w_{xs} = -n-1, w_{ss} = n+2, b_s = 0.5, \forall\, 0\leq i\leq n, b_{y_i} = 0.5.
\]
\begin{figure}
\includegraphics[width=9.5cm]{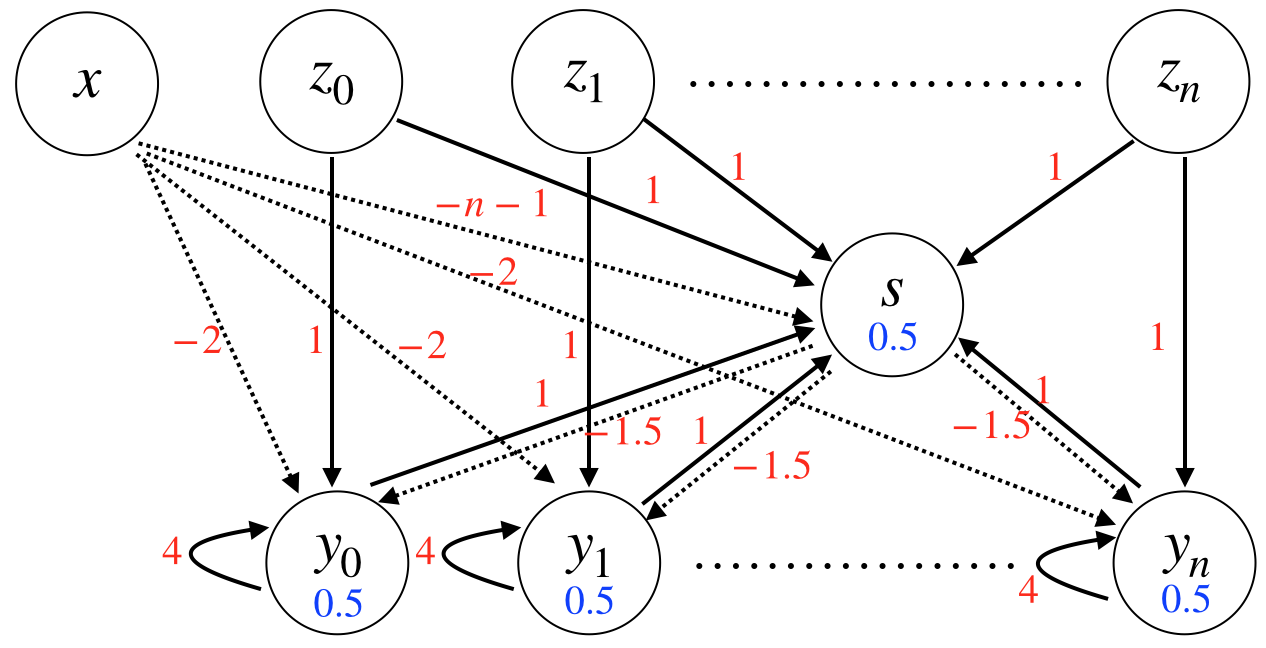}
\centering
\caption{Second Stage}
\end{figure}
Notice that the above weight ensures the following one step firing rule:
\begin{lemma}
\label{firing2}
For $0\leq i\leq n$, neurons $y_i^{(t)}, s^{(t)}$ fire according to the following rules: 
\begin{enumerate}
	\item $y_i^{(t)} = 1$ if and only if $y_i^{(t-1)} = 1$, or ($y_i^{(t-1)} = 0, x^{(t-1)} = 0, s^{(t-1)} = 0$ and $z_i^{(t-1)} = 1$).
	\item $s^{(t)} = 1$ if and only if $s^{(t-1)} = 1$, or (there exists $i, i'$ such that $z_i^{(t-1)} = 1$ or $y_{i'}^{(t-1)} = 1$, and $x^{(t-1)} = 0$).
\end{enumerate}
\end{lemma}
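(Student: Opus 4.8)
The plan is a direct verification from the threshold equations, in the same style as the proof of Lemma~\ref{firing1}: for each of $y_i$ and $s$ I would write the membrane potential at time $t$ as an explicit affine form in the states of its presynaptic neurons at time $t-1$, then check the Heaviside condition against the bias in every relevant case. Both items are biconditionals, so I would prove each in two directions --- the ``if'' direction by a case-by-case lower bound on the potential strictly above the bias, and the ``only if'' direction by contraposition, bounding the potential below the bias whenever every listed triggering condition fails.

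For part~(1): the presynaptic neurons of $y_i$ are $x$, $y_i$ itself, $z_i$ and $s$, so the potential of $y_i^{(t)}$ is $-2x^{(t-1)} + 4y_i^{(t-1)} + z_i^{(t-1)} - 1.5\, s^{(t-1)}$, compared with $b_{y_i} = 0.5$. The ``if'' direction splits along the two disjuncts of the rule: if $y_i^{(t-1)} = 1$, the self-excitation $4$ beats the worst-case inhibition $2 + 1.5$ coming from $x$ and $s$; if instead $y_i^{(t-1)} = 0$ but $x^{(t-1)} = s^{(t-1)} = 0$ and $z_i^{(t-1)} = 1$, the potential is exactly $1$. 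For ``only if'', if $y_i^{(t-1)} = 0$ and at least one of $x^{(t-1)} = 1$, $s^{(t-1)} = 1$, $z_i^{(t-1)} = 0$ holds, then, with the self term absent, the potential is at most $\max\{-2+1,\,1-1.5,\,0\} = 0 \le 0.5$, so $y_i^{(t)} = 0$. The point requiring care is the boundary sub-case $y_i^{(t-1)} = 1$, $x^{(t-1)} = 1$, $z_i^{(t-1)} = 0$, $s^{(t-1)} = 1$, where the potential lands exactly on the bias; this must be dispatched either by the tie-breaking convention for $\Theta$ or, when the lemma is applied inside the composed network, by the input assumption $x^{(t-1)} = 0$ that holds at all times $\ge T$, which excludes the sub-case.

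For part~(2): the presynaptic neurons of $s$ are $x$, $s$ itself, and all of $z_0,\dots,z_n,y_0,\dots,y_n$, so the potential of $s^{(t)}$ is $-(n+1)x^{(t-1)} + (n+2)s^{(t-1)} + \sum_{j=0}^n z_j^{(t-1)} + \sum_{j=0}^n y_j^{(t-1)}$, compared with $b_s = 0.5$. The ``if'' direction is immediate: if $s^{(t-1)} = 1$ the net $(n+2) - (n+1) = 1$ already clears the bias regardless of the rest; if $x^{(t-1)} = 0$ and some $z_j^{(t-1)} = 1$ or $y_j^{(t-1)} = 1$, the potential is at least $1$. The ``only if'' direction is where the real work is: in the contrapositive case $s^{(t-1)} = 0$ and $x^{(t-1)} = 1$, the crude bound $\sum_j z_j^{(t-1)} + \sum_j y_j^{(t-1)} \le 2(n+1)$ does not beat $n+1$. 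I would close this gap with the auxiliary invariant that $s^{(t')} = 0$ forces $y_j^{(t')} = 0$ for every $j$ --- capturing any bit flips $s$ on in the same step by part~(2), and $s$ is self-sustaining --- so that $\sum_j y_j^{(t-1)} = 0$ and the potential is at most $-(n+1) + (n+1) = 0 < 0.5$. That invariant would be proved together with parts~(1) and~(2) by a single induction on $t$: at step $t$ one uses part~(2) at time $t$ (now available) together with the invariant at time $t-1$ to re-derive the invariant at time $t$. Setting up this simultaneous induction cleanly, and checking it is compatible with the boundary sub-case of part~(1), is the main obstacle; everything else is bookkeeping arithmetic on $\{0,1\}$-valued sums.
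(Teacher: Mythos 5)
Your approach for part~(1) matches the paper's calculation, and the boundary sub-case you flag at $y_i^{(t-1)}=1$, $x^{(t-1)}=1$, $z_i^{(t-1)}=0$, $s^{(t-1)}=1$ (potential exactly $0.5$) is real: with $b_{y_i}=0.5$ and the paper's convention $\Theta(0)=0$, the tie goes the wrong way and the ``if'' direction fails. Neither of your two proposed escapes actually works, though --- the tie-breaking convention is already fixed the other way, and the sub-case is reachable in the composed network (after a capture, $x$ can spike again while $z_i$ has reset, $s=1$, and $y_i=1$), so it cannot be assumed away by the input hypothesis. The actual resolution is that $b_{y_i}=0.5$ in the weight table is a typo: the paper's own proof uses $b_{y_i}=0.1$ throughout, and under that value the self-excitation clears the bias with slack $0.4$ in every sub-case with $y_i^{(t-1)}=1$.

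For part~(2) you have found a genuine hole in the paper's proof. With the declared weights, when $s^{(t-1)}=0$ and $x^{(t-1)}=1$ the potential of $s$ is $\sum_j z_j^{(t-1)} + \sum_j y_j^{(t-1)} - (n+1)$, which can be as large as $n+1$, yet the paper asserts it is $\le 0 < b_s$. Your auxiliary invariant (that $s^{(t')}=0$ forces $y_j^{(t')}=0$ for all $j$) is exactly what is silently assumed, and it does hold on reachable states: any $y_j$ can only first turn on in a step where $x^{(t-1)}=0$ and $z_j^{(t-1)}=1$, which by part~(2) simultaneously turns $s$ on, and $s$ is self-sustaining. Your joint induction is the right way to carry it. Two remarks. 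First, once $b_{y_i}$ is corrected to $0.1$, part~(1) is a genuine per-step biconditional for all states, so there is no tension between it and the induction --- the only invariant you need is the one on $s$ and the $y_j$. Second, this repair quietly changes the lemma's scope from ``for every presynaptic configuration'' to ``for every reachable configuration,'' and that caveat should be carried into the application in Theorem~\ref{capture}. An alternative repair that preserves the paper's intended for-all-states reading is to set $w_{xs}=-(2n+2)$ and $w_{ss}=2n+3$; then the crude bound $\sum_j z_j^{(t-1)} + \sum_j y_j^{(t-1)} - (2n+2) \le 0$ closes this case with no invariant, while the ``if'' cases still clear $b_s=0.5$ with slack $0.5$.
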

\begin{proof}
\textbf{Case (1):}
The potential of $y_i^{(t)}$ is
\begin{multline*}
    w_{xy_i}x^{(t-1)} + w_{y_iy_i}y_i^{(t-1)}  + w_{z_iy_i}z_i^{(t-1)} + w_{z_iz_i}z_i^{(t-1)} + w_{sy_i}s^{(t-1)} \\= -2x^{(t-1)} + 4y_i^{(t-1)} + z_i^{(t-1)} -1.5s^{(t-1)}.
\end{multline*}
\textbf{Only If:}
Let's show the only if direction for the firing rule of $y_i^{(t)}$ first. If $y_i^{(t-1)} = 0, x^{(t-1)} = 1$, the potential of $y_i^{(t)}$ is
\[
-2 + z_i^{(t-1)} -1.5s^{(t-1)} \leq -1 < 0.1 = b_{y_i}.
\] 
If $y_i^{(t-1)} = 0, s^{(t-1)} = 1$, the potential of $y_i^{(t)}$ is
\[
-2x^{(t-1)} + z_i^{(t-1)} -1.5 \leq -0.5 < 0.1 = b_{y_i}.
\] 
If $y_i^{(t-1)} = 0, z_i^{(t-1)} = 0$, the potential of $y_i^{(t)}$ is
\[
-2x^{(t-1)} -1.5s^{(t-1)} \leq 0 < 0.1 = b_{y_i}.
\] 
In all three cases, we have $y_i^{(t)} = 0$.\\
\textbf{If:}
For the if direction, if $y_i^{(t-1)} = 1$, then the potential of $y_i^{(t)}$ is
\[
-2x^{(t-1)} + 4 + z_i^{(t-1)} -1.5s^{(t-1)} \geq 0.5 > 0.1 = b_{y_i}.
\] 
If $y_i^{(t-1)} = 0, x^{(t-1)} = 0, s^{(t-1)} = 0, z_i^{(t-1)} = 1$, the potential of $y_i^{(t)}$ is
\[
4y_i^{(t-1)} + 1 \geq 1 > 0.1 = b_{y_i}.
\]
In both cases, we have $y_i^{(t)} = 1$.\\

\textbf{Case (2):}
The potential of $s^{(t)}$ is
\begin{multline*}
\sum_{j=0}^nw_{z_js}z_j^{(t-1)} + \sum_{j=0}^nw_{y_js}y_j^{(t-1)} + w_{xs}x^{(t-1)} + w_{ss}s^{(t-1)}  \\= \sum_{j=0}^nz_j^{(t-1)} + \sum_{j=0}^ny_j^{(t-1)} -(n+1)x^{(t-1)}  + (n+2)s^{(t-1)}.
\end{multline*}
\textbf{Only If:}
For the only if direction, if $s^{(t-1)} = 0$ and for all $j, 0\leq j\leq n$ we have $y_j^{(t-1)} =z_j^{(t-1)} = 0$, then the potetntial of $s^{(t)}$ is
\[
-(n+1)x^{(t-1)} \leq 0 < 0.5 = b_s.
\]
If $s^{(t-1)} = 0, x^{(t-1)} = 1$, the potetntial of $s^{(t)}$ is
\[
\sum_{j=0}^nz_j^{(t-1)} + \sum_{j=0}^nz_j^{(t-1)} -(n+1)  \leq 0 < 0.5 = b_s.
\]
In both cases, we have $s^{(t)} = 0$.\\
\textbf{If:}
For the if direction, if there exists $i, 0\leq i\leq n$ such that $y_i^{(t-1)} = 1$ and $x^{(t-1)} = 0$, then the potential of $s^{(t)}$ is
\[
\sum_{j=0}^nz_j^{(t-1)} +\sum_{j\neq i, 0\leq j\leq n}^ny_j^{(t-1)} + 1 + (n+2)s^{(t-1)}\geq 1 > 0.5 = b_s.
\]
If there exists $i, 0\leq i\leq n$ such that $z_i^{(t-1)} = 1$ and $x^{(t-1)} = 0$, the potential of $s^{(t)}$ is
\[
\sum_{j=0}^ny_j^{(t-1)} +\sum_{j\neq i, 0\leq j\leq n}^nz_j^{(t-1)} + 1 + (n+2)s^{(t-1)}\geq 1 > 0.5 = b_s.
\]
If $s^{(t-1)} = 1$, the potential of $s^{(t)}$ is
\[
\sum_{j=0}^nz_j^{(t-1)} + \sum_{j=0}^ny_j^{(t-1)} -(n+1)x^{(t-1)}  + (n+2)\geq 1 > 0.5 = b_s.
\]
In all three cases, we have $s^{(t)} = 1$ as desired.
\end{proof}
Now we can describe the behaviors of the capture network in the following theorem. The network persistantly captures the state of $z_i$ for all $i, 0\leq i \leq n$ at the first time point such that $x=0$ and there exists some $\hat{i}$ such that $z_{\hat{i}} = 1$ into $y_i$ for all $i, 0\leq i \leq n$.
\begin{theorem}
\label{capture}
For the network at the second stage, let $t'\geq 0$ be such that $x^{(t')} = 0$ and there exists $\hat{j}$ such that $z_{\hat{j}}^{(t')} = 1$, and for all $t, 0\leq t<t'$, either $x^{(t)} = 1$ or for all $i, 0\leq i\leq n$ we have $z_{i}^{(t)} = 0$. Then for all $i, t$ such that $0\leq i\leq n, t > t'$ we have $y_i^{(t)} = z_i^{(t')}$. 
\end{theorem}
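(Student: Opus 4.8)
The plan is to exhibit two invariants and prove each by induction on $t$, using the one-step firing rules of Lemma~\ref{firing2}. The first invariant says that \emph{nothing has been captured yet} up to and including time $t'$: for all $t$ with $0 \le t \le t'$ we have $s^{(t)} = 0$ and $y_i^{(t)} = 0$ for every $i$. The second invariant says that from time $t'+1$ onward the capture is complete and frozen: for all $t > t'$, $s^{(t)} = 1$ and $y_i^{(t)} = z_i^{(t')}$ for every $i$. The theorem is exactly the $y_i$-part of the second invariant.

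For the first invariant I would induct on $t$. The base case $t = 0$ holds because $s$ and all $y_i$ are non-input neurons, hence $0$ at time $0$. For the inductive step with $1 \le t \le t'$, assume $s^{(t-1)} = 0$ and $y_i^{(t-1)} = 0$ for all $i$; since $t - 1 < t'$, the theorem's hypothesis gives the dichotomy that either $x^{(t-1)} = 1$ or $z_i^{(t-1)} = 0$ for all $i$. Feeding this into Lemma~\ref{firing2}: $s^{(t)}$ fires only if $s^{(t-1)} = 1$ (false) or some $z_i^{(t-1)} = 1$ or $y_{i'}^{(t-1)} = 1$ holds together with $x^{(t-1)} = 0$; the $y$'s are all $0$ by the inductive hypothesis, and in the remaining subcase the dichotomy forces either $x^{(t-1)} = 1$ (contradicting the $x^{(t-1)}=0$ requirement) or all $z_i^{(t-1)} = 0$, so $s^{(t)} = 0$. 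The argument for $y_i^{(t)} = 0$ is identical: the only way to fire is the capture disjunct, which needs both $x^{(t-1)} = 0$ and $z_i^{(t-1)} = 1$, and the dichotomy forbids that.

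For the second invariant I would induct on $t$ starting at $t = t'+1$. At the base case, the first invariant gives $s^{(t')} = 0$ and $y_i^{(t')} = 0$ for all $i$, and the theorem's hypothesis gives $x^{(t')} = 0$ and $z_{\hat j}^{(t')} = 1$ for some $\hat j$. Lemma~\ref{firing2} then yields $s^{(t'+1)} = 1$ (the pair $z_{\hat j}^{(t')} = 1$, $x^{(t')} = 0$ triggers it), and for each $i$ the capture disjunct ``$y_i^{(t')} = 0$, $x^{(t')} = 0$, $s^{(t')} = 0$, $z_i^{(t')} = 1$'' holds precisely when $z_i^{(t')} = 1$, so $y_i^{(t'+1)} = z_i^{(t')}$. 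For the inductive step with $t - 1 \ge t'+1$, assume $s^{(t-1)} = 1$ and $y_i^{(t-1)} = z_i^{(t')}$. Then $s^{(t)} = 1$ since $s$ self-sustains once on. If $z_i^{(t')} = 1$ then $y_i^{(t-1)} = 1$, so $y_i$ self-sustains and $y_i^{(t)} = 1 = z_i^{(t')}$; if $z_i^{(t')} = 0$ then $y_i^{(t-1)} = 0$, and the only route to $y_i^{(t)} = 1$ is the capture disjunct, which requires $s^{(t-1)} = 0$ — but $s^{(t-1)} = 1$ — so $y_i^{(t)} = 0 = z_i^{(t')}$.

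The proof is mostly bookkeeping; the conceptual point worth flagging is that $s$ acts as a \emph{one-shot latch}: it switches on at the first instant when $x = 0$ coincides with some $z_i = 1$ (equivalently, with some $y_{i'} = 1$ already set), and once on it permanently blocks any further $y_i$ from capturing, which is exactly what pins $y_i^{(t)}$ to the snapshot $z_i^{(t')}$. The only mildly delicate spots are the degenerate case $t' = 0$ (where the first invariant is just the initialization), and ensuring the hypothesis' dichotomy is invoked at index $t-1$ in the pre-capture induction rather than at $t$.
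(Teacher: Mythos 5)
Your proof is correct and follows essentially the same route as the paper's: establish that $s$ and all $y_i$ remain zero through time $t'$, show the capture happens exactly at step $t'+1$, and then show the latch $s$ freezes the $y_i$ thereafter. You merely make the two inductions explicit where the paper compresses them into repeated appeals to Lemma~\ref{firing2}.
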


\begin{proof}
First by Lemma \ref{firing2}, for all $t, 0<t\leq t'$ and for all $i, 0\leq i\leq n$ we have $y_i^{(t)} = s^{(t)} = 0$. Now at time $t'+1$, by Lemma \ref{firing2}, we see that $y_i^{(t'+1)} = z_i^{(t')}, \forall i, 0\leq i\leq n$ and $s^{(t'+1)} = 1$. Now by Lemma \ref{firing2}, we know that for all $t, t>t'$ we have $s^{(t)} = 1$. Now by Lemma \ref{firing2} again, if $y_i^{(t' + 1)} = 0$, then since for all $t, t> t'$ we have $s^{(t)} = 1$, for all $t> t'$ we have $y_i^{(t)} = 0$; and if $y_i^{(t'+1)} = 1$, then we also have for all $t, t>t'$, $y_i^{(t)} = 1$ as desired.
\end{proof}

\subsection{Wrap up}
Now we are ready to prove the main Theorem \ref{FCS} by setting $n = m = \lceil \log T'\rceil$
\begin{proof}
We are going to prove the main theorem by composing the networks from stage one and two together. If for all $t, 0\leq t\leq T$ we have $x^{(t)} = 0$, then the network satisfies the criterion trivially since for all $0\leq t\leq T$, $y_i^{(t)} = 0$. If not, then there exists $\hat{t}\geq0, L_x>0$ such that for all $t, 0\leq t<\hat{t}$ we have $x^{(t)} = 0$, for all $i, 0\leq i<L_x$ we have $x^{(\hat{t} + i)} = 1$, and $x^{(\hat{t} + L_x)} = 0$ where $L_x$ is the length of the first consecutive spikes interval. Let $L_x = \sum_{j=0}^\infty a_j2^j$; then by Theorem \ref{binary} and Lemma \ref{base}, for all $i, 0\leq i\leq n$, we have $z_i^{(\hat{t} + L_x - 1)} = a_i$. Now because $L_x>0$, we know there exists $\hat{j}$ such that $z_{\hat{j}}^{(\hat{t'} + L_x)} = 1$ by Theorem \ref{binary}. And by Lemma \ref{firing1}, we know for all $i, t$ such that $0\leq t\leq\hat{t}, 0\leq i\leq n$, we have $z_{i}^{(t)} = 0$. Now the assumption of Theorem \ref{capture} is satisfied with $t' = \hat{t} + L_x$. By Theorem \ref{capture}, we get for all $t, i$ such that $0\leq i\leq n, t\geq \hat{t} + L_x$ we have $y_i^{(t)} = a_i$ and $T+1\geq \hat{t} + L_x$ as desired. This shows that the above network solves FCSC(T) problem in time $1$ with $O(\log T)$ neurons.
\end{proof}

Notice that in fact by the proof above, FCSC network enjoys an early convergence property. The network actually converges at time $\hat{t} + L_x$. Therefore we have the following stronger version of Theorem \ref{FCS}.
\begin{corollary}
For all $t, 0\leq t\leq T$, FCSC network with $O(\log T)$ neurons solves FCSC(t) problem in time $1$.  
\end{corollary}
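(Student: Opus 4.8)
The plan is to observe that the corollary is essentially free from the proof of Theorem~\ref{FCS}: the single network constructed there, built once with $n = m = \lceil \log T\rceil$ (hence using $O(\log T)$ neurons), already works for every $t \le T$ simultaneously, and the only thing to add is a sharper record of \emph{when} the capture network has locked in its output. So I would not build a new network; I would reuse the one from Theorem~\ref{FCS} verbatim and re-run its analysis with the input-length bound $t$ in place of $T$.

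Concretely, first I would fix $t$ with $0 \le t \le T$ and an arbitrary input satisfying $x^{(s)} = 0$ for all $s \ge t$, and dispose of the degenerate case: if $x$ never fires, then Lemma~\ref{firing1} keeps every $z_i$ silent, Lemma~\ref{firing2} keeps every $y_i$ and $s$ silent, so $y^{(s)} = 0 = F(0)$ for all $s$ and the time bound $1$ holds trivially. Otherwise I would introduce $\hat t \ge 0$ and $L_x > 0$ as in the definition of FCSC($t$) --- so $x$ is $0$ on $[0,\hat t)$, equal to $1$ on $[\hat t, \hat t + L_x)$, and $0$ at $\hat t + L_x$ --- and replay the middle of the proof of Theorem~\ref{FCS}: writing $L_x = \sum_j a_j 2^j$, Theorem~\ref{binary} together with Lemma~\ref{base} identify the counter state at time $\hat t + L_x$ with the digits $(a_i)$, Theorem~\ref{binary} provides some $\hat j$ with $z_{\hat j}^{(\hat t + L_x)} = 1$, and Lemma~\ref{firing1} shows all $z_i$ are silent at every time $\le \hat t$. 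Hence the hypotheses of Theorem~\ref{capture} hold with $t' = \hat t + L_x$, and $y_i^{(s)} = a_i$ for all $s > \hat t + L_x$.

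The one genuinely new step --- and the only place where the bound $t$ rather than $T$ is used --- is the inequality $\hat t + L_x \le t$: since $x^{(s)} = 0$ for all $s \ge t$, the last spike of the first block, at time $\hat t + L_x - 1$, satisfies $\hat t + L_x - 1 < t$. From this I would conclude that $y_i^{(s)} = a_i$ for every $s \ge t + 1$, so that with $F$ the (restriction to $\{0,\dots,t\}$ of the) injective binary-representation map $\{0,\dots,T\} \to \{0,1\}^m$, the network outputs $F(L_x)$ for all $s \ge t + t'$ with $t' = 1$; since $t$ and the input were arbitrary, this gives the claim for all $0 \le t \le T$ at once.

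I do not expect a real obstacle, since the construction and every firing computation are already in hand; the one thing to get right is the bookkeeping inequality $\hat t + L_x \le t$ --- the first consecutive block must terminate strictly before the length bound --- which is exactly what lets us sharpen the ``$T+1$'' appearing in the proof of Theorem~\ref{FCS} to ``$t+1$'', i.e., to early convergence.
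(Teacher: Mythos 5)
Your proposal is correct and matches the paper's approach: the paper likewise reuses the FCSC(T) network verbatim and invokes the early-convergence fact that the capture locks in at time $\hat t + L_x$, which you make explicit via the bookkeeping inequality $\hat t + L_x \le t$. You also (correctly) record the counter state at time $\hat t + L_x$ rather than $\hat t + L_x - 1$, fixing a minor off-by-one slip in the paper's wrap-up of Theorem~\ref{FCS}.
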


\section{Total Spikes Counting}
To count the total number of spikes in an arbitrary interval requires the persistence of neurons without external spikes. Notice that on FCSC network, each neuron toggles itself according to binary representation without delay. However, the persistence of neurons and toggles without delays are conflicting objectives; persistence of neurons stabilizes the network while toggling without delays changes the firing patterns of the network. For example, we use self-inhibition to count mod $2$ but if we use self-inhibition to count mod $2$, the neuron cannot maintain the count during intervals with no inputs. In this section, we circumvent this difficulty by allowing the network to enter an unstable intermediate state that still stores the information of the count when the spikes arrive; however, the network will converge to a \textit{clean state} that according to binary representation after one step of computation without external signals, and this \textit{clean state} is stable in an arbitrary interval with no input.  

In this section, because the self-inhibition used in Section $3$ to count mod $2$ cannot induce persistence, we build a network of four neurons to count mod $4$ to replace the function of $z_0, z_1$ in Section $3$. We then iteratively build the rest of the network that approximately fires in binary on top of the mod $4$ counter network.

\subsection{Mod $4$ Counter Network}
The construction of the mod $4$ counter network is the following:
\begin{multline*}
    w_{xf_i} = 1, w_{f_if_i} = 2, 0\leq i\leq 3, w_{f_{j+1}f_{j}} = -3, 0\leq j \leq 2,\\  w_{f_1f_{2}} = w_{f_2f_{3}} = w_{f_3f_0} = 1, w_{f_0f_3} = w_{f_3f_1}  = -3
\end{multline*}
and 
\[
b_{f_1} = 0.5, b_{f_i} = 1.5, i\neq 1.
\]
\begin{figure}
\includegraphics[width=9.5cm]{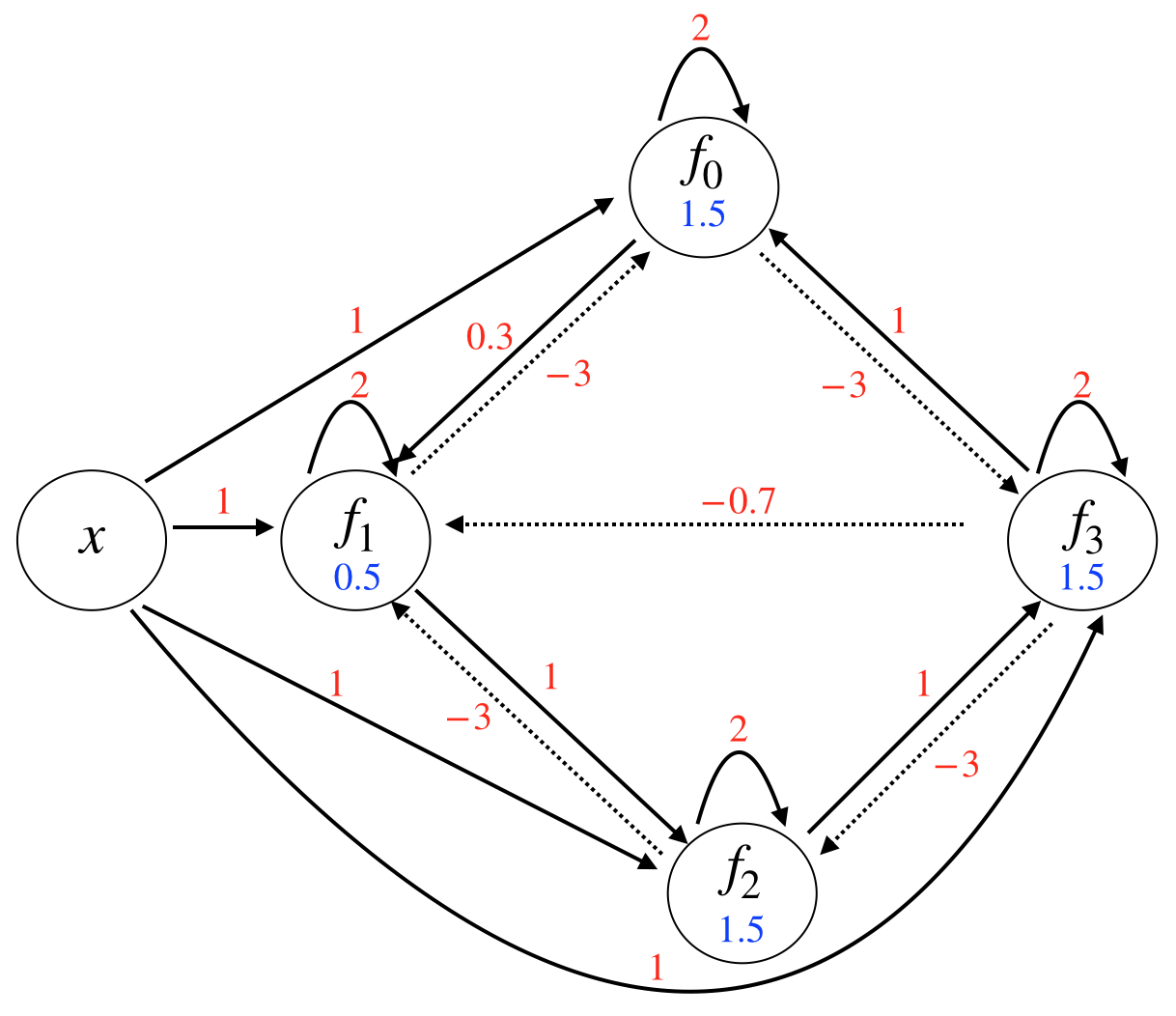}
\centering
\caption{mod $4$ Counter Network}
\end{figure}
We have the following lemma to specify the firing rules of $f_i$:
\begin{lemma}
\label{firing3}
For all $t, i$ such that $t\geq 1, 0\leq i < 4$, neurons $f_i^{(t)}$ fire according to the following rules: 
\begin{enumerate}
	\item $f_1^{(t)} = 1$ if and only if $f_2^{(t-1)}=0$, and ($x^{(t-1)} = 1, f_3^{(t-1)} = 0$ or $f_1^{(t-1)} = 1$ or $x^{(t-1)} = 1, f_0^{(t-1)} = 1$).
	\item For $i\neq 1$ we have $f_i^{(t)} = 1$ if and only if $f_{(i+1)\bmod 4}^{(t-1)} = 0$, and ($x^{(t-1)} = 1, f_{(i-1)\bmod 4}^{(t-1)} = 1$ or $f_i^{(t-1)} = 1$).
\end{enumerate}
\end{lemma}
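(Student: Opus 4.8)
The statement is a one-step firing rule, so it follows the same template as the potential computations already done for Lemma~\ref{firing1} and Lemma~\ref{firing2}. Once the weights and biases are fixed, the membrane potential driving each $f_i^{(t)}$ is a fixed affine function of the five binary quantities $x^{(t-1)},f_0^{(t-1)},f_1^{(t-1)},f_2^{(t-1)},f_3^{(t-1)}$, and since $\Theta$ is monotone it is enough to show that this potential is positive on every assignment that makes the claimed condition true and nonpositive on every assignment that makes it false. So the plan is: first write down the four potential formulas explicitly from the construction, and then, for each $i$, check the ``if'' and ``only if'' directions by a short, grouped case analysis on the relevant presynaptic values.

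For $i\neq 1$ I would handle $f_0,f_2,f_3$ uniformly, since each of these potentials has the shape $x^{(t-1)}+2f_i^{(t-1)}-3f_{(i+1)\bmod 4}^{(t-1)}+f_{(i-1)\bmod 4}^{(t-1)}-\tfrac32$. The natural split is on the value of the strong inhibitor $f_{(i+1)\bmod 4}^{(t-1)}$: when it equals $1$, the remaining terms sum to at most $1+2+1=4$, so the potential is at most $4-3-\tfrac32<0$ and $f_i^{(t)}=0$, which is exactly the ``$f_{(i+1)\bmod 4}^{(t-1)}=0$'' requirement; when it equals $0$, the potential collapses to $x^{(t-1)}+2f_i^{(t-1)}+f_{(i-1)\bmod 4}^{(t-1)}-\tfrac32$, which is positive precisely when $f_i^{(t-1)}=1$ (self-sustain) or when $x^{(t-1)}=1$ and $f_{(i-1)\bmod 4}^{(t-1)}=1$ (carry-in), and nonpositive otherwise. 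Both directions of the biconditional fall straight out of these two subcases, using only the integrality of the inputs and the gaps $\tfrac12$ between thresholds and attainable potentials.

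For $i=1$ the wiring is the asymmetric one specified for $f_1$ — a lowered threshold $\tfrac12$ and the altered incoming edges involving $f_0$ and $f_3$ — so it must be done by hand rather than by the uniform argument above. I would again split first on the dominant inhibitor $f_2^{(t-1)}$, showing that $f_2^{(t-1)}=1$ drives the potential below $\tfrac12$ no matter what the other neurons do, and then, under $f_2^{(t-1)}=0$, enumerate the few remaining patterns of $(x^{(t-1)},f_0^{(t-1)},f_1^{(t-1)},f_3^{(t-1)})$ and match them against the three-way disjunction $(x^{(t-1)}=1\wedge f_3^{(t-1)}=0)\vee(f_1^{(t-1)}=1)\vee(x^{(t-1)}=1\wedge f_0^{(t-1)}=1)$. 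This $f_1$ computation is where I expect the only real friction: because the threshold is $\tfrac12$ rather than $\tfrac32$ and its inhibitory inputs are not the ``successor only'' pattern of the other three neurons, one has to check carefully that a single input spike alone is enough to set $f_1$, that self-firing persists against $f_3$, and that an active $f_2$ still overrides everything — i.e. that all the constants genuinely line up. Everything else is bookkeeping of the same flavor as in the earlier lemmas.
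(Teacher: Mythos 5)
Your plan for $i\neq 1$ is correct and substantively the same as the paper's proof: the paper also writes the potential $x^{(t-1)}+2f_i^{(t-1)}+f_{(i-1)\bmod 4}^{(t-1)}-3f_{(i+1)\bmod 4}^{(t-1)}$ and checks each direction against the threshold $3/2$ by a flat case enumeration. Your reorganization --- split on the dominant inhibitor $f_{(i+1)\bmod 4}^{(t-1)}$ first, then read off the remaining two-term condition --- is a slightly tidier presentation of the same arithmetic, and the bounds you quote are right.

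The gap is precisely where you hedge. For $f_1$ you say one must ``check carefully that all the constants genuinely line up,'' but you never carry out that check, and with the weights as literally written in the construction they do \emph{not} line up. The construction assigns $w_{f_3f_1}=-3$ and declares no edge from $f_0$ to $f_1$, so the potential of $f_1$ against threshold $1/2$ would be $x^{(t-1)}+2f_1^{(t-1)}-3f_2^{(t-1)}-3f_3^{(t-1)}$. Take the (reachable, intermediate) state $f_0^{(t-1)}=f_3^{(t-1)}=1$, $f_1^{(t-1)}=f_2^{(t-1)}=0$, $x^{(t-1)}=1$: the stated rule demands $f_1^{(t)}=1$ via the disjunct $x^{(t-1)}=1\wedge f_0^{(t-1)}=1$, but the potential equals $1-3=-2<1/2$, giving $f_1^{(t)}=0$. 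Likewise $f_1^{(t-1)}=f_3^{(t-1)}=1$, $f_2^{(t-1)}=0$ should keep $f_1$ on by self-sustain, yet the potential is at most $1+2-3=0$. The paper's own proof silently works instead with $w_{f_3f_1}=-0.7$ and $w_{f_0f_1}=0.3$, under which every case clears the threshold by at least $0.1$; so one of the displayed construction and the displayed proof contains a transcription error. Your proposal needed to surface this rather than defer it: as things stand, part~(1) of the lemma cannot be derived from the construction as written, and fixing the $f_1$ incoming weights is a prerequisite to the enumeration you describe.
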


\begin{proof}
\textbf{Case (1):}
The potential of $f_1^{(t)}$ is
\begin{multline*}
    w_{xf_1}x^{(t-1)} + w_{f_1f_1}f_1^{(t-1)} + w_{f_2f_1}f_2^{(t-1)} + w_{f_3f_1}f_3^{(t-1)} + w_{f_0f_1}f_0^{(t-1)} \\= x^{(t-1)} + 2f_1^{(t-1)} -3f_2^{(t-1)} -0.7f_3^{(t-1)} + 0.3f_0^{(t-1)}.
\end{multline*}
\textbf{Only If:}
Let's show the only if direction for the firing rule of $f_1^{(t)}$ first. If $f_2^{(t-1)} = 1$, then the potential of $f_1^{(t)}$ is
\[
x^{(t-1)} + 2f_1^{(t-1)} -3 -0.7f_3^{(t-1)} + 0.3f_0^{(t-1)}\leq 0.3 <0.5 = b_{f_1}.
\] 
If $f_1^{(t-1)} = 0, x^{(t-1)} = 0$, then the potential of $f_1^{(t)}$ is
\[
 -3f_2^{(t-1)} -0.7f_3^{(t-1)} + 0.3f_0^{(t-1)}\leq 0.3 <0.5 = b_{f_1}.
\] 
If $f_1^{(t-1)} = 0, f_3^{(t-1)} = 1, f_0^{(t-1)} = 0$, then the potential of $f_1^{(t)}$ is
\[
x^{(t-1)}  -3f_2^{(t-1)} -0.7 \leq 0.3 <0.5 = b_{f_1}.
\] 
In all three cases, we have $f_1^{(t)} = 0$.\\
\textbf{If:}
For the if direction, if $f_2^{(t-1)} =0, f_1^{(t-1)} = 1$, then the potential of $f_1^{(t)}$ is
\[
x^{(t-1)} + 2  -0.7f_3^{(t-1)} + 0.3f_0^{(t-1)} \geq 1.3 >0.5 = b_{f_1}.
\] 
If $f_2^{(t-1)} =0, x^{(t-1)} = 1, f_3^{(t-1)} = 0$, then the potential of $f_1^{(t)}$ is 
\[
1+ 2f_1^{(t-1)}  + 0.3f_0^{(t-1)} \geq 1 >0.5 = b_{f_1}.
\] 
If $f_2^{(t-1)} =0, x^{(t-1)} = 1, f_0^{(t-1)} = 1$, then the potential of $f_1^{(t)}$ is 
\[
1 + 2f_1^{(t-1)}  -0.7f_3^{(t-1)} + 0.3\geq 0.6 >0.5 = b_{f_1}.
\] 
In all three cases, we have $f_1^{(t)} = 1$. 

\textbf{Case (2):}
For $i\neq 1$, The potential of $f_i^{(t)}$ is
\begin{multline*}
w_{xf_i}x^{(t-1)} + w_{f_if_i}f_i^{(t-1)} + w_{f_{(i-1)\bmod 4}f_{i}}f_{(i-1)\bmod 4}^{(t-1)} + w_{f_{(i+1)\bmod 4}f_i}f_{(i+1)\bmod 4}^{(t-1)} \\= x^{(t-1)} + 2f_i^{(t-1)} +f_{(i-1)\bmod 4}^{(t-1)} -3f_{(i+1)\bmod 4}^{(t-1)}.
\end{multline*}
\textbf{Only If:}
For the only if direction, if $f_{(i+1)\bmod 4}^{(t-1)} = 1$, then the potential of $f_i^{(t)}$ is
\[
x^{(t-1)} + 2f_i^{(t-1)} +f_{(i-1)\bmod 4}^{(t-1)} -3 \leq 1 < 1.5 = b_i.
\]
If $x^{(t-1)} = 0, f_i^{(t-1)} = 0$, then the potential of $f_i^{(t)}$ is
\[
f_{(i-1)\bmod 4}^{(t-1)} -3f_{(i+1)\bmod 4}^{(t-1)} \leq 1 < 1.5 = b_i.
\]
If $f_{(i-1)\bmod 4}^{(t-1)}  = 0, f_i^{(t-1)} = 0$, then the potential of $f_i^{(t)}$ is
\[
x^{(t-1)} -3f_{(i+1)\bmod 4}^{(t-1)} \leq 1 < 1.5 = b_i.
\]
In all three cases, we have $f_i^{(t)} = 0$.\\
\textbf{If:}
For the if direction, if $f_{(i+1)\bmod 4}^{(t-1)} = 0, x^{(t-1)} = 1, f_{(i-1)\bmod 4}^{(t-1)} = 1$, then the potential of $f_i^{(t)}$ is
\[
1 + 2f_i^{(t-1)} + 1 \geq 2 > 1.5 = b_i.
\]
If $f_{(i+1)\bmod 4}^{(t-1)} = 0, f_i^{(t-1)} = 1$, then the potential of $f_i^{(t)}$ is
\[
x^{(t-1)} + 2 +f_{(i-1)\bmod 4}^{(t-1)}  \geq 2 > 1.5 = b_i.
\]
In both cases, we have $f_i^{(t)} = 1$ as desired.
\end{proof}

For $0\leq i<4$, define a \textit{clean state} with value $i$ at time $t'$ of the mod $4$ counter network to be a state in which $f_i^{(t')} = 1$ and for all $j, j\neq i$ we have $f_j^{(t')} = 0$. By Lemma \ref{firing3}, it is trivial to see that if for all $t,t\geq t'$ we have $x^{(t)} = 0$, then for all $t,t\geq t'$ and for all $i, 0\leq i < 4$ we have $f_i^{(t)} = f_i^{(t')}$. Using Lemma \ref{firing3}, we have the following lemma describing the behaviors of mod $4$ counter network. Intuitively, when a new input arrives, the network enters an intermediate state in which both neurons represent the old count and the new count fire; when there is no input, the neuron that represents the new count will inhibit the neuron that represents the old count to stabilize the network in a \textit{clean state}. 
\begin{lemma}
\label{mod4}
Let the mod $4$ counter network be at a clean state with value $\hat{i}$ at time $t'$. Fix a positive integer $L$. For all $i, 0\leq i< L$, let $x^{(t' + i)} = 1$ and $x^{(t' + L)} = 0$. Then, at time $t, t'< t < t'+L+1$, we have the state of the network being
\[
f_{(\hat{i} + t - t')\bmod 4}^{(t)} = f_{(\hat{i} + t - t' - 1)\bmod 4}^{(t)} = 1, f_{(\hat{i} + t - t' - 2)\bmod 4}^{(t)} = f_{(\hat{i} + t - t' - 3)\bmod 4}^{(t)} = 0.
\]
Furthermore, the network will be at a clean state again at time $t'+L+1$ with $f_{(\hat{i} + L)\bmod 4}^{(t'+L+1)} = 1$.
\end{lemma}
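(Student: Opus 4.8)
The plan is to establish both assertions by induction on $t$, using only the one-step firing rules of Lemma~\ref{firing3}. For $t'\le t\le t'+L$ set $k(t)=(\hat i+t-t')\bmod 4$, and let $P(t)$ be the statement that $f_{k(t)}^{(t)}=f_{(k(t)-1)\bmod 4}^{(t)}=1$ and $f_{(k(t)+1)\bmod 4}^{(t)}=f_{(k(t)+2)\bmod 4}^{(t)}=0$. Since the four indices $(k(t)-j)\bmod 4$ for $j\in\{0,1,2,3\}$ run over all of $\{0,1,2,3\}$, the statement $P(t)$ pins down the full state of $f_0,f_1,f_2,f_3$ at time $t$. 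The first assertion of the lemma is precisely ``$P(t)$ holds for every $t$ with $t'<t<t'+L+1$'', and the ``furthermore'' follows from $P(t'+L)$ by one more step.

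For the base case $P(t'+1)$: the network is at the clean state with value $\hat i$ at time $t'$, so $f_{\hat i}^{(t')}=1$ and $f_j^{(t')}=0$ for $j\ne\hat i$, and $x^{(t')}=1$ because $0<L$. I would substitute this state into Lemma~\ref{firing3} and read off, neuron by neuron, that $f_{\hat i}$ keeps firing by its self-sustain clause (its successor $f_{(\hat i+1)\bmod 4}$ is silent at $t'$), that $f_{(\hat i+1)\bmod 4}$ switches on because $x^{(t')}=1$, its predecessor $f_{\hat i}$ fired, and its successor is silent, that $f_{(\hat i+2)\bmod 4}$ stays off (its predecessor and itself are both silent), and that $f_{(\hat i-1)\bmod 4}$ stays off because its successor $f_{\hat i}$ fired. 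For the inductive step, assume $P(t)$ with $t'+1\le t<t'+L$, so that $x^{(t)}=1$ (as $t-t'<L$); writing $k=k(t)$ and feeding the state described by $P(t)$ into Lemma~\ref{firing3} gives that $f_{(k+1)\bmod 4}$ switches on, $f_k$ stays on, and $f_{(k-1)\bmod 4}$ and $f_{(k+2)\bmod 4}$ stay off, which is exactly $P(t+1)$. Finally, for the ``furthermore'', apply the same reading to $P(t'+L)$ but now with $x^{(t'+L)}=0$: the leading neuron $f_{k(t'+L)}$ still self-sustains, the trailing neuron $f_{(k(t'+L)-1)\bmod 4}$ is extinguished by its firing successor, and neither silent neuron can come on, since firing $f_{(k(t'+L)+1)\bmod 4}$ would require $x^{(t'+L)}=1$; hence only $f_{(\hat i+L)\bmod 4}$ fires at time $t'+L+1$, the asserted clean state.

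The only thing that needs care is the bookkeeping forced by the asymmetry in Lemma~\ref{firing3}, where $f_1$ obeys a different rule than $f_0,f_2,f_3$ (with the extra disjunct ``$x^{(t-1)}=1$ and $f_3^{(t-1)}=0$''). Concretely, I would run each of the ``switches on / stays on / stays off'' verifications above through the correct clause, using the $f_1$-specific rule whenever the neuron in question is $f_1$. This remains mechanical: in every $x=1$ transition, a neuron that should stay off always has a firing successor, which falsifies the leading ``$f_{(i+1)\bmod 4}^{(t-1)}=0$'' guard so that no disjunct need be examined; in the single $x=0$ transition the leading neuron self-sustains, the trailing neuron and one silent neuron have their guard falsified by a firing successor, and the remaining silent neuron $f_{(k(t'+L)+1)\bmod 4}$ stays off because each of its firing clauses (including the extra $f_1$-disjunct) requires either $x^{(t'+L)}=1$ or its own previous firing, both false under $P(t'+L)$. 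Organizing the whole argument as a check of all four neurons $f_0,f_1,f_2,f_3$ in each of the four residue classes of $k(t)$ makes it routine, so I expect no genuine obstacle beyond this finite case analysis.
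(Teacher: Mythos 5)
Your proposal is correct and takes essentially the same route as the paper: induction on $t$ with the clean state plus $x^{(t')}=1$ for the base case $P(t'+1)$, then $P(t)$ plus $x^{(t)}=1$ for the inductive step, then $P(t'+L)$ plus $x^{(t'+L)}=0$ to land in the clean state at $t'+L+1$, with each transition read off from Lemma~\ref{firing3}. You are somewhat more explicit than the paper about checking all four residue classes of $k(t)$ and about verifying the $f_1$-specific clause, but this is a matter of exposition rather than a different argument.
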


\begin{proof}
First, let's use induction on $t$ to prove at time $t, t'< t < t'+L+1$, we have the state of the network be 
\[
f_{(\hat{i} + t - t')\bmod 4}^{(t)} = f_{(\hat{i} + t - t' - 1)\bmod 4}^{(t)} = 1, f_{(\hat{i} + t - t' - 2)\bmod 4}^{(t)} = f_{(\hat{i} + t - t' - 3)\bmod 4}^{(t)} = 0.
\]
\textbf{Base Case:}
By Lemma \ref{firing3}, we have 
\[
f_{(\hat{i} + 1)\bmod 4}^{(t'+1)} = f_{(\hat{i} + t - t')\bmod 4}^{(t'+1)} = 1, f_{(\hat{i} -1)\bmod 4}^{(t'+1)} = f_{(\hat{i} - 2)\bmod 4}^{(t'+1)} = 0
\]
for the base case.\\
\textbf{Inductive Step:}
Now assume the induction hypothesis is true for $t = k$, since we have $x^{(k)} = 1$ by Lemma \ref{firing3}, we indeed have
\[
f_{(\hat{i} + k+1 - t')\bmod 4}^{(k+1)} = f_{(\hat{i} + k+1 - t' - 1)\bmod 4}^{(k+1)} = 1, f_{(\hat{i} + k+1 - t' - 2)\bmod 4}^{(k+1)} = f_{(\hat{i} + k+1 - t' - 3)\bmod 4}^{(k+1)} = 0.
\]
This completes the induction.

Now since $x^{(t'+L)} = 0$, by Lemma \ref{firing3} we can derive the state of the network at time $t'+L+1$
\[
f_{(\hat{i} + L)\bmod 4}^{(t'+L+1)} = 1, f_{j}^{(t'+L+1)} = 0,\ \forall j\neq (\hat{i} + L)\bmod 4
\]
as desired.
\end{proof}

\subsection{TSC Network}
Now we iteratively build the network with the following rule on top of the mod $4$ counter network,
\begin{multline*}
w_{f_3z_i} = w_{f_3in_i} =  3, w_{f_0z_i} = w_{f_0in_i} = -1, w_{xz_i} = w_{xin_i} = 1,\\ w_{z_jz_i} =  w_{z_jin_i} = 1,\ \forall j, 2\leq j<i, w_{in_iz_i} = -i-3, w_{z_iin_i} = 1, w_{z_iz_i} = i + 3
\end{multline*}
and
\[
b_{z_i} = i+1.5,\ b_{in_i} = i + 2.5.
\]
\begin{figure}
\centering
\includegraphics[width=10.3cm]{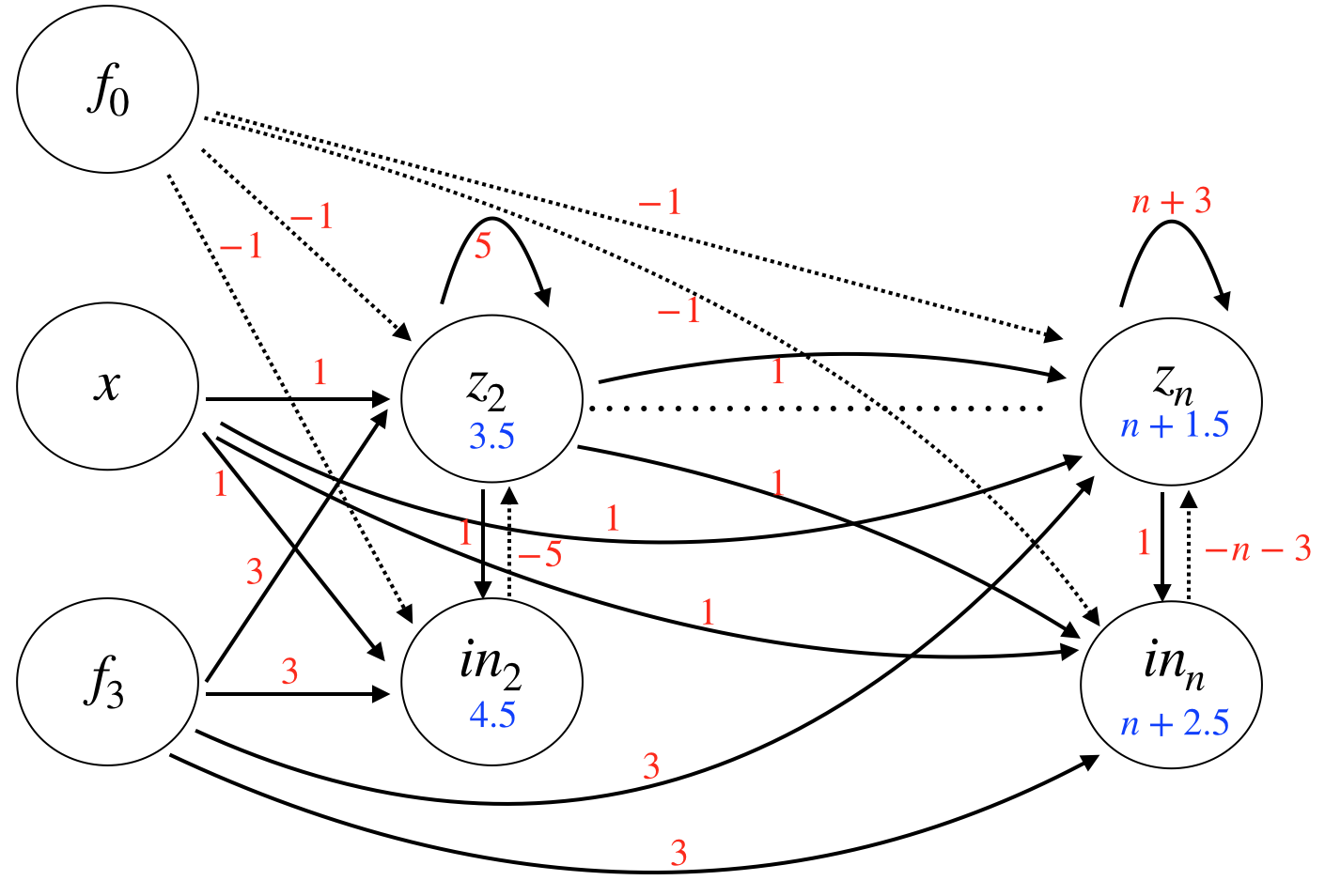}
\caption{Total spikes counting (TSC) Network}

\end{figure}
In the full construction of the TSC network, intuitively, we replace the function of $z_0, z_1$ in Section $3$ with a mod $4$ counter network. We design the weights coming from $f_3, f_0$ such that they will induce proper carry in an approximate binary representation at $z_i, i\geq 2$, and we use a similar idea as the mod $4$ counter network to make TSC network converge to an exact binary representation in one computation step without input. 

The following lemma specifies the firing rules of $z_i, in_i$ for $i\geq 2$:
\begin{lemma}
\label{firing4}
For $i\geq 2$, neurons $z_i^{(t)}, in_i^{(t)}$ fire according to the following rules: 
\begin{enumerate}
	\item $z_i^{(t)} = 1$ if and only if $in_i^{(t-1)} = 0$, and either ($f_3^{(t-1)} = 1, f_0^{(t-1)} = 0, x^{(t-1)} = 1$ and for all $j, 2\leq j<i$ we have $z_j^{(t-1)} = 1$) or $z_i^{(t-1)} = 1$.
	\item $in_i^{(t)} = 1$ if and only if $z_i^{(t-1)} = 1, f_3^{(t-1)} = 1, f_0^{(t-1)} = 0, x^{(t-1)} = 1$ and for all $j, 2\leq j<i$ we have $z_j^{(t-1)} = 1$.
\end{enumerate}
\end{lemma}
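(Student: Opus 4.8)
The plan is to follow the template of the proofs of Lemmas~\ref{firing1} and~\ref{firing3}: read off the membrane potentials of $z_i^{(t)}$ and $in_i^{(t)}$ from the weights, and for each biconditional verify the ``if'' direction directly and the ``only if'' direction by contrapositive, in every case comparing the potential to the threshold. From the construction, the potential of $z_i^{(t)}$ equals $3f_3^{(t-1)}-f_0^{(t-1)}+x^{(t-1)}+\sum_{j=2}^{i-1}z_j^{(t-1)}-(i+3)in_i^{(t-1)}+(i+3)z_i^{(t-1)}$ with threshold $i+1.5$, and the potential of $in_i^{(t)}$ equals $3f_3^{(t-1)}-f_0^{(t-1)}+x^{(t-1)}+\sum_{j=2}^{i-1}z_j^{(t-1)}+z_i^{(t-1)}$ with threshold $i+2.5$.

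Part~(2) is then entirely routine: when all five listed conditions hold the potential is $3+1+(i-2)+1=i+3>i+2.5$, while negating any single one of them drops the potential to at most $i+2<i+2.5$ (a $1$ lost from $f_0^{(t-1)}=1$, from a missing $z_j$, from $x^{(t-1)}=0$, or from $z_i^{(t-1)}=0$, and down to $i$ if $f_3^{(t-1)}=0$). For part~(1), the ``if'' direction and most of ``only if'' are also local: if $in_i^{(t-1)}=0$ and either $z_i^{(t-1)}=1$ or the carry condition $(f_3^{(t-1)}=1,\ f_0^{(t-1)}=0,\ x^{(t-1)}=1,\ \text{all }z_j^{(t-1)}=1)$ holds, the potential is $\ge i+2$; if $in_i^{(t-1)}=0$, $z_i^{(t-1)}=0$, and the carry condition fails, a four-way split ($f_3^{(t-1)}=0$, or $x^{(t-1)}=0$, or some $z_{\hat{j}}^{(t-1)}=0$, or $f_0^{(t-1)}=1$) gives potential $\le i+1<i+1.5$; and if $in_i^{(t-1)}=1$ with $z_i^{(t-1)}=0$ the potential is $\le-1$. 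The small-time edge cases (e.g.\ $t=1$, where $in_i^{(0)}=z_i^{(0)}=0$ and the rule correctly reads $z_i^{(1)}=0$) are immediate.

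The one case that is not purely local --- and which I expect to be the main obstacle --- is $in_i^{(t-1)}=1$ together with $z_i^{(t-1)}=1$. Here the self-excitation $(i+3)z_i^{(t-1)}$ exactly cancels the inhibition $-(i+3)in_i^{(t-1)}$, and the remaining terms $3f_3^{(t-1)}-f_0^{(t-1)}+x^{(t-1)}+\sum_{j=2}^{i-1}z_j^{(t-1)}$ can be as large as $i+2>i+1.5$ on local grounds, so $z_i^{(t)}=0$ cannot be concluded from the weights alone. The fix is to show that $in_i^{(t-1)}=1$ forces $f_0^{(t-1)}=1$, which lowers the bound to $i+1<i+1.5$; this is precisely what the $(+3,-1)$ weighting of $(f_3,f_0)$ into $z_i$ is designed for. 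To establish $f_0^{(t-1)}=1$: by part~(2), $in_i^{(t-1)}=1$ gives $f_3^{(t-2)}=1$, $f_0^{(t-2)}=0$, $x^{(t-2)}=1$; I then prove the auxiliary invariant that $f_1$ and $f_3$ never fire simultaneously (minimal-counterexample argument: if $f_1^{(t)}=f_3^{(t)}=1$ then Lemma~\ref{firing3} forces $f_2^{(t-1)}=f_0^{(t-1)}=0$, hence $f_3^{(t-1)}=1$, hence $f_1^{(t-1)}=1$, contradicting minimality since at $t=0$ all of $f_0,\dots,f_3$ are $0$); this yields $f_1^{(t-2)}=0$, and then the $f_0$ clause of Lemma~\ref{firing3} (using $x^{(t-2)}=1$ and $f_3^{(t-2)}=1$) gives $f_0^{(t-1)}=1$, closing the case. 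Alternatively one can cite Lemma~\ref{mod4} together with a check of the initial all-zero phase, but the direct invariant is more self-contained.
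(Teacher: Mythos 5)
Your proposal is correct, and it in fact exposes and repairs a genuine gap in the paper's own proof. In the only-if direction of part~(1), for the case $in_i^{(t-1)}=1$ the paper asserts that the potential
\[
3f_3^{(t-1)}-f_0^{(t-1)}+\sum_{j=2}^{i-1}z_j^{(t-1)}+(i+3)z_i^{(t-1)}-(i+3)+x^{(t-1)}
\]
is at most $i+1$, but as you observe the algebraic maximum over $\{0,1\}$ assignments is $i+2$, attained at $f_3^{(t-1)}=z_i^{(t-1)}=x^{(t-1)}=1$, $f_0^{(t-1)}=0$, $z_j^{(t-1)}=1$ for all $2\le j<i$; this exceeds the threshold $b_{z_i}=i+1.5$. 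So, unlike Lemmas~\ref{firing1} and~\ref{firing3}, this lemma is not a purely one-step consequence of the weights and needs exactly the reachability fact you supply. Your repair is sound: part~(2) is purely local, so $in_i^{(t-1)}=1$ forces $f_3^{(t-2)}=1$, $f_0^{(t-2)}=0$, $x^{(t-2)}=1$; your invariant that $f_1$ and $f_3$ never co-fire (established by minimal counterexample from the all-zero start, correctly chaining the two clauses of Lemma~\ref{firing3}) gives $f_1^{(t-2)}=0$; and the $f_0$ clause of Lemma~\ref{firing3} then yields $f_0^{(t-1)}=1$, which lowers the potential to at most $i+1<i+1.5$. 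Your remaining local cases match the paper's term by term.

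Two small remarks. First, the lemma as literally stated is only true on configurations reachable from the all-zero initialization with $x$ as sole input; that is the reading your argument establishes and the one actually used in Lemma~\ref{TSC}, and it is worth saying explicitly. Second, the alternative route you mention via Lemma~\ref{mod4} is more delicate than it looks, since the all-zero initial configuration is not a clean state in the paper's sense, so the direct invariant about $f_1$ and $f_3$ is indeed the cleaner choice.
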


\begin{proof}
\textbf{Case (1):}
The potential of $z_i^{(t)}$ is

\begin{multline*}
w_{f_3z_i}f_3^{(t-1)} + w_{f_0z_i}f_0^{(t-1)} + \sum_{j=2}^{i-1}w_{z_jz_i}z_j^{(t-1)} + w_{z_iz_i}z_i^{(t-1)} + w_{in_iz_i}in_i^{(t-1)} + w_{xz_i}x^{(t-1)} \\= 3f_3^{(t-1)} -f_0^{(t-1)} + \sum_{j=2}^{i-1}z_j^{(t-1)} + (i+3)z_i^{(t-1)} -(i+3)in_i^{(t-1)} + x^{(t-1)}.
\end{multline*}
\textbf{Only If:}
Let's show the only if direction for the firing rule of $z_i^{(t)}$ first. If $in_i^{(t-1)} = 1$, the potential of $z_i^{(t)}$ is
\[
3f_3^{(t-1)} -f_0^{(t-1)} + \sum_{j=2}^{i-1}z_j^{(t-1)} + (i+3)z_i^{(t-1)} -(i+3)+ x^{(t-1)} \leq i+1 < i+1.5 = b_{z_i}.
\] 
If $f_3^{(t-1)} = 0, z_i^{(t-1)} = 0$, the potential of $z_i^{(t)}$ is
\[
-f_0^{(t-1)} + \sum_{j=2}^{i-1}z_j^{(t-1)}  -(i+3)in_i^{(t-1)}+ x^{(t-1)} \leq i-1 < i+1.5 = b_{z_i}.
\] 
If $f_0^{(t-1)} = 1, z_i^{(t-1)} = 0$, the potential of $z_i^{(t)}$ is
\[
3f_3^{(t-1)} -1 + \sum_{j=2}^{i-1}z_j^{(t-1)}  -(i+3)in_i^{(t-1)}+ x^{(t-1)} \leq i+1 < i+1.5 = b_{z_i}.
\] 
If $x^{(t-1)} = 0, z_i^{(t-1)} = 0$, the potential of $z_i^{(t)}$ is
\[
3f_3^{(t-1)} -f_0^{(t-1)} + \sum_{j=2}^{i-1}z_j^{(t-1)}  -(i+3)in_i^{(t-1)}\leq i+1 < i+1.5 = b_{z_i}.
\] 
If $z_i^{(t-1)} = 0$ and there exists $\hat{j}, 2\leq \hat{j}<i$ such that $z_{\hat{j}}^{(t-1)} = 0$, the potential of $z_i^{(t)}$ is
\[
3f_3^{(t-1)} -f_0^{(t-1)} + \sum_{j\neq \hat{j}, 2\leq j<i}z_j^{(t-1)}  -(i+3)in_i^{(t-1)}+ x^{(t-1)} \leq i+1 < i+1.5 = b_{z_i}.
\] 
In all cases, we have $z_i^{(t)} = 0$.\\
\textbf{If:}
For the if direction, if $in_i^{(t-1)} = 0, f_3^{(t-1)} = 1, f_0^{(t-1)} = 0, x^{(t-1)} = 1$ and for all $j, 2\leq j<i$ we have $z_j^{(t-1)} = 1$, then the potential of $z_i^{(t)}$ is
\[
3  + \sum_{j=2}^{i-1}1+ (i+3)z_i^{(t-1)} + 1 \geq i+2 > i+1.5 = b_{z_i}.
\] 
If $in_i^{(t-1)} = 0, z_i^{(t-1)} = 1$, the potential of $z_i^{(t)}$ is
\[
3f_3^{(t-1)} -f_0^{(t-1)} + \sum_{j=2}^{i-1}z_j^{(t-1)} + (i+3) + x^{(t-1)}\geq i+2 > i+1.5 = b_{z_i}.
\]
In both cases, we have $z_i^{(t)} = 1$.

\textbf{Case (2):} The potential of $in_i^{(t)}$ is
\begin{multline*}
w_{f_3in_i}f_3^{(t-1)} + w_{f_0in_i}f_0^{(t-1)} + \sum_{j=2}^{i-1}w_{z_jin_i}z_j^{(t-1)} + w_{z_iin_i}z_i^{(t-1)}  + w_{xin_i}x^{(t-1)} \\= 3f_3^{(t-1)} -f_0^{(t-1)} + \sum_{j=2}^{i-1}z_j^{(t-1)} + z_i^{(t-1)} + x^{(t-1)}.
\end{multline*}
\textbf{Only If:}
For the only if direction, if $z_i^{(t-1)} = 0$, then the potential of $in_i^{(t)}$ is 
\[
3f_3^{(t-1)} -f_0^{(t-1)} + \sum_{j=2}^{i-1}z_j^{(t-1)} + x^{(t-1)} \leq i+2 < i+2.5 = b_{in_i}.
\]
If $f_3^{(t-1)} = 0$, the potential of $in_i^{(t)}$ is 
\[
-f_0^{(t-1)} + \sum_{j=2}^{i-1}z_j^{(t-1)} + z_i^{(t-1)} + x^{(t-1)} \leq i < i+2.5 = b_{in_i}.
\]
If $f_0^{(t-1)} = 1$, the potential of $in_i^{(t)}$ is 
\[
3f_3^{(t-1)} -1 + \sum_{j=2}^{i-1}z_j^{(t-1)} + z_i^{(t-1)} + x^{(t-1)} \leq i+2 < i+2.5 = b_{in_i}.
\]
If $x^{(t-1)} = 0$, the potential of $in_i^{(t)}$ is 
\[
3f_3^{(t-1)} -f_0^{(t-1)} + \sum_{j=2}^{i-1}z_j^{(t-1)} + z_i^{(t-1)}  \leq i+2 < i+2.5 = b_{in_i}.
\]
If there exists $\hat{j}, 2\leq \hat{j} < i$ such that $z_{\hat{j}} = 0$, the potential of $in_i^{(t)}$ is 
\[
3f_3^{(t-1)} -f_0^{(t-1)} + \sum_{j\neq \hat{j}, 2\leq j < i}z_j^{(t-1)} + z_i^{(t-1)} + x^{(t-1)}  \leq i+2 < i+2.5 = b_{in_i}.
\]
In all cases, $in_i^{(t)} = 0$.\\
\textbf{If:}
For the if direction, if $z_i^{(t-1)} = 1, f_3^{(t-1)} = 1, f_0^{(t-1)} = 0, x^{(t-1)} = 1$ and for all $j, 2\leq j<i$ we have $z_j^{(t-1)} = 1$, then the potential of $in_i^{(t)}$ is 
\[
3+ \sum_{j=2}^{i-1}1 + 1+ 1  \leq i+3 > i+2.5 = b_{in_i}.
\]
We have $in_i^{(t)} = 1$ as desired.
\end{proof}

Define a clean state at time $t'$ of TSC network with value $X$ stored be one in which
\begin{enumerate}
	\item $f_{X\bmod 4}^{(t')} = 1, f_j^{(t')} = 0,\ \forall j\neq X\bmod 4$ (i.e., the mod $4$ counter subnetwork is clean with value $X\bmod 4$).
	\item For $X = \sum_{i=0}^\infty a_i2^i, a_i\in \lbrace 0, 1\rbrace$, $z_{k}^{(t')} = a_k,\forall k\geq 2$.
	\item $in_i^{(t')} = 0\text{ if }X\bmod 2^{i+1} = 2^{i+1}-1$.
\end{enumerate}
So being at a clean state for TSC network with value $X$ stored implies being at a clean state with value $X\bmod 4$ for its mod $4$ counter subnetwork with $z_i$ in binary representation for $i\geq 2$. By Lemma \ref{firing4}, it is trivial to see that if for all $t\geq t'$ we have $x^{(t)} = 0$, then for all $i\geq 2$ and for all $t, t\geq t'$ we have $f_i^{(t)} = f_i^{(t')}$. Using Lemma \ref{firing4}, we have the following lemma describing the behaviors of the TSC network.
\begin{lemma}
\label{TSC}
Let TSC network be at a clean state at time $t'$ with value $X$ stored. Fix a positive integer $L$. For all $i$ such that $0\leq i< L$, let $x^{(t' + i)} = 1$ and $x^{(t' + L)} = 0$. Then, at $t, t'<t<t'+L+1$, $z_i, in_i$ fire with the following rules for all $i\geq 2$:
\begin{enumerate}
	\item for $1=X + t -t' \bmod 2^{i+1} < 2^i$, $z_i^{(t)} = 0$.
	\item for $1 <X + t -t' \bmod 2^{i+1} < 2^i$, $z_i^{(t)} = in_i^{(t)} = 0$.
	\item for $X + t-t' \bmod 2^{i+1} \geq 2^i$, we have $z_i^{(t)} = 1, in_i^{(t)} = 0$.
	\item for $X + t-t' \bmod 2^{i+1} = 0$, we have $z_i^{(t)} = 1, in_i^{(t)} = 1$.
\end{enumerate}
Furthermore, the network will be at a clean state with value $X+L$ stored at time $t'+L+1$. 
\end{lemma}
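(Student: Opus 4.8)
The plan is to break the statement into three parts: (i) control the mod $4$ counter subnetwork using Lemma~\ref{mod4}; (ii) establish the four firing cases for $z_i, in_i$ ($i\ge 2$) by induction on $t$ over the range $t'<t\le t'+L$; and (iii) read off the asserted clean state at time $t'+L+1$ by one further application of Lemma~\ref{firing4}.

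For (i): the subnetwork on $f_0,\dots,f_3$ is driven only by $x$, and the hypothesis ``clean state with value $X$'' says exactly that this subnetwork is at a clean state with value $X\bmod 4$ at time $t'$, so Lemma~\ref{mod4} applies to it verbatim with $\hat i = X\bmod 4$. The one consequence I need downstream is the \emph{carry pulse}: for $t'<t\le t'+L$, we have $f_3^{(t-1)}=1$ and $f_0^{(t-1)}=0$ if and only if $(X+t-t')\equiv 0\pmod 4$ — from Lemma~\ref{mod4} when $t-1>t'$, and from the clean-state values of the $f_j$ when $t-1=t'$. Lemma~\ref{mod4} also gives that at time $t'+L+1$ the subnetwork is clean with value $(X\bmod 4 + L)\bmod 4=(X+L)\bmod 4$, which is condition~(1) of a clean state with value $X+L$.

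For (ii), fix $t$ with $t'<t\le t'+L$ and write $r_i=(X+t-t')\bmod 2^{i+1}$ and $r_i'=(r_i-1)\bmod 2^{i+1}=(X+t-1-t')\bmod 2^{i+1}$. Since $x^{(t-1)}=1$ and the $f_j^{(t-1)}$ are pinned down, Lemma~\ref{firing4} computes $z_i^{(t)},in_i^{(t)}$ from $in_i^{(t-1)}$, $z_i^{(t-1)}$ and the predicate ``$z_j^{(t-1)}=1$ for all $2\le j<i$'', all of which are supplied by the inductive hypothesis (the four cases at $t-1$) or, in the base case $t=t'+1$, by the clean-state conditions (2)--(3) at $t'$. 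One then splits on $r_i$. If $r_i=1$ then $r_i'=0$: the pulse is off and $in_i^{(t-1)}=1$ (respectively $z_i^{(t')}=0$ in the base case), so $z_i^{(t)}=0$. If $1<r_i<2^i$ then $z_i^{(t-1)}=0$, and a short residue check shows the carry cannot reach position $i$: if the pulse were on then $r_i'\equiv 3\pmod 4$ together with $r_i'\le 2^i-2$ forces a $0$ among bits $2,\dots,i-1$ of $r_i'$, hence some $z_j^{(t-1)}=0$; so $z_i^{(t)}=in_i^{(t)}=0$. If $r_i\ge 2^i$ then $z_i^{(t)}=1$, by persistence when $r_i'\ge 2^i$ (so $z_i^{(t-1)}=1$) and through the pulse when $r_i'=2^i-1$ (then $r_i=2^i\equiv 0\pmod 4$ and all lower $z_j^{(t-1)}=1$), while $in_i^{(t)}=0$ because either $z_i^{(t-1)}=0$ or, when $z_i^{(t-1)}=1$ and the pulse is on, the bound $r_i'\le 2^{i+1}-2$ again kills a lower bit. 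Finally, if $r_i=0$ then $r_i'=2^{i+1}-1$, so $z_i^{(t-1)}=1$, $in_i^{(t-1)}=0$, all lower $z_j^{(t-1)}=1$ and the pulse is on, giving $z_i^{(t)}=1$ and $in_i^{(t)}=1$ — the intended unstable configuration.

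For (iii), apply Lemma~\ref{firing4} once more at $t=t'+L+1$ using $x^{(t'+L)}=0$: the carry routes all vanish, so $z_i^{(t'+L+1)}=1$ iff $in_i^{(t'+L)}=0$ and $z_i^{(t'+L)}=1$, which in cases 1--3 at $t'+L$ reproduces bit $i$ of $X+L$, and in the case $r_i=0$ the fact $in_i^{(t'+L)}=1$ collapses the unstable $z_i=1$ to the correct $z_i=0$; also $in_i^{(t'+L+1)}=0$ since $x^{(t'+L)}=0$. Together with (i) this is exactly the clean state with value $X+L$. The main obstacle is the residue bookkeeping in (ii): verifying that the local mod $4$ carry pulse AND-ed with the indicators $z_2,\dots,z_{i-1}$ realizes precisely a binary carry into $z_i$ — in particular that $in_i$ fires at the single residue $r_i=0$ and nowhere else — together with the boundary step $t-1=t'$, where the argument uses $in_i^{(t')}=0$ for every index $i$ with $z_i^{(t')}=1$; I take this as part of being at a clean state, as it holds at every clean state produced by the TSC dynamics (condition~(3) being the instance one most needs).
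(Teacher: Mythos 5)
Your proof follows essentially the same route as the paper's: Lemma~\ref{mod4} pins down the $f$ subnetwork, an induction on $t$ with a case split on $(X+t-t')\bmod 2^{i+1}$ is run through Lemma~\ref{firing4}, and a final application of Lemma~\ref{firing4} at time $t'+L+1$ using $x^{(t'+L)}=0$ reads off the new clean state. You in fact correct a small slip in the paper's case $X+L\equiv 0\pmod{2^{i+1}}$: since $x^{(t'+L)}=0$, Lemma~\ref{firing4} gives $in_i^{(t'+L+1)}=0$, not $1$ as the paper writes.

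One caution about the patch you flag at the end. You propose reading the clean-state definition as also requiring $in_i^{(t')}=0$ for every index $i$ with $z_i^{(t')}=1$, to protect the persistence branch of Lemma~\ref{firing4} at the base step $t=t'+1$. That is not quite enough. The base step also needs $in_i^{(t')}=0$ in the incoming-carry case $X\bmod 2^{i+1}=2^i-1$, where $z_i^{(t')}=0$ but the carry pulse is supposed to set $z_i^{(t'+1)}=1$; because the rule ``$z_i^{(t)}=1$ iff $in_i^{(t-1)}=0$ and (carry or persist)'' gates the carry by $in_i^{(t-1)}=0$ as well, a stray $in_i^{(t')}=1$ would kill the carry and the induction would fail at $t'+1$. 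The clean, index-free requirement is simply $in_i^{(t')}=0$ for all $i\geq 2$, which is automatic for any clean state the network actually reaches (since $in_i^{(t)}=1$ forces $x^{(t-1)}=1$, and a clean state is entered one step after $x=0$). The paper's written proof silently uses this too without stating it, so the two arguments still coincide, but the particular strengthening you proposed does not cover the carry case.
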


\begin{proof}
Just like the mod $4$ counter network case, we want to deduce the behaviors of network at $t, t'<t<t'+L+1$ using induction first.\\ 
\textbf{Base Case:}
Fix $i$, for $t = t'+1$, we have the following cases
\begin{enumerate}
  \item $0 <X + 1 \bmod 2^{i+1} < 2^i$:\\ This implies that $0\leq X \bmod 2^{i+1}< 2^i-1$. This shows that not all $j, j< i$ we have $z_j^{(t-1)} = 1$ or $f_3^{(t-1)} = 0$ or $f_0^{(t-1)} = 1$. By Lemma \ref{firing4}, we have $z_i^{(t)} = in_i^{(t)} = 0$.
  \item $X + 1 \bmod 2^{i+1} \geq 2^i$:\\ This implies that $2^i-1\leq X \bmod 2^{i+1}< 2^{i+1}-1$. This shows that either for all $j, j< i$ we have $f_3^{(t-1)} = 1, f_0^{(t-1)} = 0, z_j^{(t-1)} = 1$ or $z_i^{(t-1)} = 1$ but not both. By Lemma \ref{firing4}, we have $z_i^{(t)} = 1, in_i^{(t)} = 0$.
  \item $X + 1 \bmod 2^{i+1} = 0$:\\ This implies that $X\bmod 2^{i+1} = 2^{i+1}-1$. This shows that $f_3^{(t-1)} = 1, f_0^{(t-1)} = 0$ and for all $j\leq i$ we have $z_j^{(t-1)} = 1$ and by the definition of a clean state, we have $in_i^{(t-1)} = 0$. Now by Lemma \ref{firing4}, we have $z_i^{(t)} = 1, in_i^{(t)} = 1$.
\end{enumerate}
\textbf{Inductive Step:}
Assume the induction hypothesis is accurate for $t = k$. We have the following cases
\begin{enumerate}
  \item $1=X + k+1 - t'\bmod 2^{i+1} < 2^i$:\\ This implies that $X + k - t'\bmod 2^{i+1} = 0$. Now by induction hypothesis and Lemma \ref{mod4}, we know that $f_3^{(k)} =1, f_0^{(k)} = 0$ and for all $j, i\geq j\geq 2$ we have $z_j^{(k)} = 1, in_j^{(k)} = 1$. By Lemma \ref{firing4}, we have $z_i^{(k+1)} = 0, in_i^{(k+1)} = 1$.
  \item $1 <X + k+1 - t' \bmod 2^{i+1} < 2^i$:\\ This implies that $1\leq X + k - t' \bmod 2^{i+1}< 2^i-1$. By induction hypothesis and Lemma \ref{mod4}, this shows that not all $j, j< i$ we have $z_j^{(k)} = 1$ or $f_3^{(k)} = 0$ or $f_0^{(k)} = 1$. By Lemma \ref{firing4}, we have $x_i^{(k+1)} = in_i^{(k+1)} = 0$.
  \item $X + k+1 - t' \bmod 2^{i+1} \geq 2^i$:\\ This implies that $2^i-1\leq X + k - t' \bmod 2^{i+1}< 2^{i+1}-1$. By induction hypothesis and Lemma \ref{mod4}, this shows that either for all $j, j< i$ we have $f_3^{(k)} = 1, f_0^{(k)} = 0,z_j^{(k)} = 1$ or $z_i^{(k)} = 1$ but not both. By Lemma \ref{firing4}, we have $z_i^{(k+1)} = 1, in_i^{(k+1)} = 0$.
  \item $X + k+1 - t' \bmod 2^{i+1} = 0$:\\ This implies that $X + k - t'\bmod 2^{i+1} = 2^{i+1}-1$. By induction hypothesis and Lemma \ref{mod4}, this shows that all $f_3^{(k)} = 1, f_0^{(k)} = 0, in_i^{(k)} = 0$ and for all $j, j\leq i$ we have $ z_j^{(k)} = 1$. Now by Lemma \ref{firing4}, we have $z_i^{(t)} = 1, in_i^{(t)} = 1$.
\end{enumerate}
This completes the induction.

Now we just need to show that at time $t'+L+1$ the network is at a clean state with value $X+L$ stored. We have the following cases:
\begin{enumerate}
  \item $1=X + L \bmod 2^{i+1} < 2^i$:\\
  By above induction, we have for $j, j\leq i$, $z_j^{(t'+L)} = 0$. No matter what the value of $in_i^{(t'+L)}$ is, by Lemma \ref{firing4} we have $z_i^{(t'+L+1)} = in_i^{(t'+L+1)} = 0$.
  \item $1 <X + L \bmod 2^{i+1} < 2^i$, $z_i^{(t)} = in_i^{(t)} = 0$:\\
  By above induction, we have $z_i^{(t'+L)} = in_i^{(t'+L)} = 0$. By Lemma \ref{firing4}, we have $z_i^{(t'+L+1)} = in_i^{(t'+L+1)} = 0$.
  \item $X + L \bmod 2^{i+1} \geq 2^i$, we have $z_i^{(t'+L)} = 1, in_i^{(t'+L)} = 0$. By Lemma \ref{firing4}, we have $z_i^{(t'+L+1)} = in_i^{(t'+L+1)} = 0$.
  \item $X + L \bmod 2^{i+1} = 0$, we have $z_i^{(t'+L)} = 1, in_i^{(t'+L)} = 1$. By Lemma \ref{firing4}, we have $z_i^{(t'+L+1)} = 0, in_i^{(t'+L+1)} = 1$.
\end{enumerate}
which is exactly a clean state with value $X+L$ stored combining with Lemma \ref{mod4}.
\end{proof}

\subsection{Wrap up}
Now we are ready for the main proof of Theorem \ref{TSCMain} by setting $n = \lceil \log T'\rceil$ and let $f_i, z_j, 0\leq i\leq 3, 2\leq j\leq n$ be our output neurons.
\begin{proof}
Let $f_i, z_j, 0\leq i < 4, 2\leq j\leq n$ be our output neurons. Let there be $X$ spikes in $T$ time steps. Let $[t_0, t_0+X_0-1], \dotsb, [t_k, t_k + X_k - 1]$ be the disjoint maximal intervals of spikes ordered by time (i.e., $x^{(t)} = 1$ if $t\in [t_i, t_i + X_i - 1]$ for some $0\leq i\leq k$ and $[t_i, t_i + X_i]\cap [t_j, t_j + X_j] = \emptyset$ for all $i\neq j$ and $t_0<t_1<\dotsb < t_k$, $\sum_{i=0}^kX_k = X$). Now I claim that at time $t_i + X_i+1$, the network is at a clean state with value $\sum_{j=0}^iX_j$ stored. We will prove the claim with induction on $i$. For $i = 0$, apply Lemma \ref{TSC}, we get that the network is at a clean state with value $X_0$ stored. Assume the network is at a clean state with value $\sum_{j=0}^iX_j$ stored at time $t_i + X_i + 1$. Then apply Lemma \ref{TSC} again, we get at time $t_{i+1} + X_{i+1} + 1$, the network is at a clean state with value $\sum_{j=0}^{i+1}X_j$ stored at time $t_{i+1} + X_{i+1} + 1$. So at time $t_k + X_k + 1 \leq T+1$, the network is at a clean state with value $\sum_{j=0}^{k}X_j = X$ stored as desired. This shows that the above network solves TSC(T) problem in time $1$ with $O(\log T)$ neurons.
\end{proof}

Notice that in fact by the proof above, TSC network enjoys an early convergence property. The network actually converges at time $t_k + X_k + 1$. Therefore we have the following stronger version of Theorem \ref{TSCMain}.
\begin{corollary}
For all $t, 0\leq t\leq T$, TSC network with $O(\log T)$ neurons solves FCSC(t) problem in time $1$.  
\end{corollary}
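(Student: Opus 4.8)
The plan is to notice that the network appearing in this corollary is \emph{the same} network built for parameter $T$ in the proof of Theorem~\ref{TSCMain}, and simply to re-run that proof while recording \emph{when} the network converges rather than merely that it has converged by time $T+1$. So first I would fix $t$ with $0\le t\le T$ and an arbitrary input $x$ with $x^{(s)}=0$ for all $s\ge t$. If $x$ never spikes, the network stays in its initial clean state with value $0$, and by Lemma~\ref{firing4} together with the stability-under-empty-input remark preceding Lemma~\ref{TSC}, the output neurons are identically $0$, matching $F(L_x)=F(0)=0$; this case is disposed of immediately.

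Otherwise, let $[t_0,t_0+X_0-1],\dots,[t_k,t_k+X_k-1]$ be the disjoint maximal spike intervals ordered in time, exactly as in the proof of Theorem~\ref{TSCMain}. Because $x^{(s)}=0$ for $s\ge t$, every such interval sits inside $[0,t)$, so $t_k+X_k-1\le t-1$, i.e.\ $t_k+X_k+1\le t+1$. Running the induction on $i$ from that proof (iterated application of Lemma~\ref{TSC}, whose hypotheses hold since the network is at a clean state at each $t_i+X_i+1$ and sees no spikes between consecutive intervals) gives that at time $t_k+X_k+1$ the network is at a clean state with value $\sum_{j=0}^k X_j=L_x$ stored. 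One checks here that $L_x\le t\le T$ fits in the clean-state representation (the value $\bmod\ 4$ held by $f_0,\dots,f_3$ together with the higher bits $z_2,\dots,z_n$, a range of $0,\dots,2^{n+1}-1$ with the same $n=O(\log T)$ as in Theorem~\ref{TSCMain}); this is the sole use of the size bound.

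Finally I would invoke stability of clean states: once the network reaches the clean state with value $L_x$ at time $t_k+X_k+1$ and receives no further input, Lemma~\ref{firing4} (and the observation that $f_i^{(\tau)}=f_i^{(\tau')}$ for $\tau\ge\tau'$ after the input ceases) forces it to remain there forever, so the output neurons spell $L_x$ in binary and $y^{(\tau)}=F(L_x)$ for every $\tau\ge t_k+X_k+1$, hence for every $\tau\ge t+1$. That is exactly ``solves TSC$(t)$ in time $1$'', with the same $O(\log T)$ neurons. The hard part is essentially nil: the only thing requiring care is the interval-endpoint bookkeeping that turns $x^{(s)}=0$ for $s\ge t$ into $t_k+X_k+1\le t+1$; this corollary is just the quantitative sharpening of the bound $t_k+X_k+1\le T+1$ already established in the proof of Theorem~\ref{TSCMain}.
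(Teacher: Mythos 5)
Your approach is essentially identical to the paper's: the paper states this corollary without a separate proof, simply observing that the proof of Theorem~\ref{TSCMain} already shows convergence at time $t_k+X_k+1$, and you have spelled out exactly that early-convergence observation with the bookkeeping $t_k+X_k+1\le t+1$. (You also correctly read the evident typo in the statement: it should say ``solves TSC$(t)$'', not FCSC.) One small caution, inherited from the paper's own proof of Theorem~\ref{TSCMain}: the all-zero initial state is \emph{not} a clean state in the sense of the definition (a clean state with value $0$ requires $f_0=1$, whereas initially $f_0^{(0)}=0$), so neither ``the network stays in its initial clean state with value $0$'' nor the paper's first application of Lemma~\ref{TSC} at time $t_0$ is strictly justified as written. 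A short check shows the all-zero state in fact behaves compatibly (one spike transitions it to the clean state with value $1$, and the output all-zeros is distinct from every clean state so the injectivity of $F$ survives), but if you want the argument airtight you should either prove that lemma-like step for the initial state or note that $F(0)$ is the all-zero vector rather than the clean state with value $0$.
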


\section{Time Lower Bound for FCSC and TSC}
In Section $4$, we mentioned that there is a conflicting objective between stabilizing the output and toggling without delays. We therefore introduced the idea of carrying information of the count at an unclean state and then converging to a clean state, which introduces one time step of delay. In this Section, we are going to show that this delay is unavoidable.  

Intuitively, the proof of the time lower bound uses the fact that if the network has to solve the problem without delay, the network must stabilize immediately at each time step. Therefore, the neurons that fire at the last round will stay firing. By injectivity of the representation, we can conclude that the network can at most count up to the network size.

The proof of Theorem \ref{FCSClower} is the follows. The proof of Theorem \ref{TSClower} is identical.
\begin{proof}
Consider the following input sequence such that for all $0\leq t < T$ we have $x^{(t)} = 1$ and for all $t \geq T$ we have $x^{(t)} = 0$. Let $X$ be the collections of all neurons in the network. Assume for all $0\leq t\leq T$, the network solves FCSC(t) at time $0$. For all $0\leq j\leq T$, let $S_j = \lbrace y_i: y_i^{(j)} = 1, 1\leq i\leq m\rbrace$. We want to show that $S_T\supsetneq S_{T-1}\supsetneq \dotsb\supsetneq S_0$. To prove this by induction on $t$, we strengthen our induction hypothesis to become $S_t\supsetneq S_{t-1}\supsetneq \dotsb\supsetneq S_0$ and for all $y_j\in S_{t-1}$ we have $w_{xy_j} > 0$.\\ 
\textbf{Base Case:} When $t = 1$, notice that $S_0 = \emptyset$ by construction. Now by injectivity of the counter representation, we have $S_1\supsetneq S_0$ and for $y_j\in S_{0}$, $w_{xy_j} > 0$ is vacuously true.\\
\textbf{Induction Step:} Now assume $S_t\supsetneq S_{t-1}\supsetneq \dotsb\supsetneq S_1$ and $w_{xy_j} > 0$ for $y_j\in S_{t-1}$. At time step $t+1$, since the network solves FCSC(t) at time $0$, the neurons in $y$ is stabilized even without the input from $x$. This means that
\[
\sum_{z\in X/\lbrace x\rbrace}w_{zy_j}z^{(t)} - b_{y_j} > 0 \text{ if }y_j\in S_t
\]
Now since $w_{xy_j} > 0$, we know that neurons in $S_{t-1}$ will keep firing at time $t+1$. For neurons in $S_t / S_{t-1}$, since those neurons fire at time $t$, we have
\[
w_{xy_j} + \sum_{z\in X/\lbrace x\rbrace}w_{zy_j}z^{(t-1)} - b_{y_j} > 0 \text{ if }y_j\in S_t/S_{t-1}
\]
And since the network solves FCSC(t-1) at time $t-1$, we also have
\[
\sum_{z\in X/\lbrace x\rbrace}w_{zy_j}z^{(t-1)} - b_{y_j} \leq 0 \text{ if }y_j\in S_t/S_{t-1}
\]
Substract two equations we get
\[
w_{xy_j} > 0 \text{ if }y_j\in S_t/S_{t-1}
\]
And hence $S_{t+1}\subset S_t$. By injectivity of the count representation, we have $S_{t+1}\supsetneq S_t$ as desired. 

Now we have $S_{T}\supsetneq S_{T'-1}\supsetneq\dotsb\supsetneq S_2\supsetneq S_1$, but we only have less than $T$ neurons. Contradiction.
\end{proof}

\section{Discussion and Future Directions}
In this work, we model how brains process temporal information over a long time range using neurons with transient activities. We propose two tasks that correspond to two common neural coding schemes, temporal coding and rate coding. ``First consecutive spikes counting" (FCSC) is equivalent to counting the distance between the first two spikes, a prevalent temporal coding scheme in the sensory cortex while “Total spikes counting"(TSC) counts the number of the spikes over an arbitrary interval, which is an example of a rate coding. We design two networks with memoryless neurons that solve the above two problems in time $1$ with $O(\log T)$ neurons and show that the time bound is tight. 

A natural extension is to consider general temporal coding. Instead of coding the distance between the first two spikes, we code an arbitrary spike pattern within a max input interval length $T$. This can be done as an application of the FCSC network. Given an input pattern with $K$ spikes, we can count the spike interval between each pair of spikes using an FCSC network and have a network that processes an arbitrary spike pattern with $K$ spikes in time $1$ with $O(K\log T)$ neurons. Since typically the temporal coding in the brain is sparse, we have $K\ll T$ and therefore the network 
only uses a small number of neurons.

Out of the spiking neural networks literature, Hitron and Parter \cite{Merav2019} tackled a similar problem. Their deterministic neural counter problem is our TSC problem. This work differ in three ways. First, our network has time bound $1$ while theirs is $ O(\log T)$. Second, we provide a time lower bound result and show our time bound is optimal. Third, they additionally consider an approximate version of the problem while we consider other forms of neural coding.

Our work follows similar approaches to Lynch et al. \cite{LMP17ITCS, LMP17DISC,LM18} by treating neurons as static circuits to explore the computational power of neural circuits. There are three noteworthy points about our model. First, instead of a stochastic model, we use a deterministic one. However, it should be noted that all the results in this work would still hold under the randomized model of Lynch et al. \cite{LMP17ITCS, LMP17DISC,LM18} with high probability. Second, we use a model that resets the potential at every round. Therefore, to retain temporal information, many self-excitation connections are employed in our networks. At the other extreme, we could have a model in which the potential does not decay from past rounds. In that model, temporal information can be stored in potentials, but it might require different mechanisms to translate the information from potentials to spikes. The two models thus could lead to different possible computational principles in brains. Third, we used a discrete time model instead of a continuous time model, which would be more biologically plausible. However, this might not be a concern since we could use Maass's synchronization module \cite{Maass1996} to simulate our discrete time model from a continuous time model.

In addition, our networks are not noise tolerant, whereas the actual neuronal dynamics are highly noisy. It will be interesting to consider a noise tolerant version of the network. One possible formulation is the following: at each time step $t$, with probability $\tau$ which does not depend on the number of neurons, a spiking event becomes a non-spike event. Can the network still count exactly or approximately with high probability? Can we find a noise tolerant network that can do this with $O(\log T)$ neurons?

Another aspect of the temporal input we have not explored is the time-scale invariance of the problem. In biology, many problems are time-scale invariant. A person who says ``apple" fast can be understood as well as a person who says ``apple" slowly. If we exploit this invariance, we might be able to reduce the networks' complexity further.

\bibliography{main}
\bibliographystyle{alpha}

\end{document}